\documentclass[12pt,a4paper,oneside]{article}
  \usepackage{authblk} %for writing authors
  \usepackage{url}
  \usepackage{amsthm}
  \usepackage{amsmath}
  \usepackage{amsfonts}
  \usepackage{amssymb}
  \usepackage{amscd}
  \usepackage{authblk}
  \usepackage{bm}
  \usepackage[mathscr]{eucal}%for \mathscr font
  \usepackage{mathtools} % for 'bsmallmatrix' environment
  \usepackage{enumitem}
  \usepackage{physics}%for using Dirac's bra-ket notation, code: $\bra{\Psi}\ket{\Psi}$, $\expval{A}{\Psi}$, $\braket{x}{y}$
  \usepackage{color}
  \usepackage{fancyhdr}
  \usepackage{hyperref}
  \usepackage{marginnote}%for comments in margin, use \marginpar{text}
  \usepackage[top=2.5cm, bottom=2.7cm,left=2.5cm, right=2.5cm, marginparwidth=1.8cm]{geometry}
  \usepackage{framed}% use \begin{frame} text \end{frame} to put it in a framed box
  \usepackage[noabbrev]{cleveref} %for numbering in align* use \cref{...} for refering
  \usepackage[T1]{fontenc} %for writing multiple authors with multiple affiliations
 \usepackage[utf8]{inputenc}%for writing multiple authors
 \usepackage{blkarray}%for block matrices with subspaces marked for example see below

\usepackage{relsize} % example: \mathlarger{\mathlarger{\Pi}}
                     % documentation is available  at https://ctan.math.washington.edu/tex-archive/macros/latex/contrib/relsize/relsize-doc.pdf 
  
  %\usepackage{showkeys}

 %Code for numbering equations sectionwise
  \numberwithin{equation}{section}
 
 \allowdisplaybreaks[1] 
  
\pagestyle{plain}

  \theoremstyle{definition}  %Title and number in bold, body in normal font.
   \newtheorem{defn}{Definition}[section]
   
   \newtheorem{eg}[defn]{Example}

   \newtheorem{rmk}[defn]{Remark}

  \theoremstyle{plain}  %n Title and number in bold, body in italic (default).
   \newtheorem{thm}[defn]{Theorem}
   \newtheorem{lem}[defn]{Lemma}
   \newtheorem{prop}[defn]{Proposition}
   \newtheorem{cor}[defn]{Corollary}

  \theoremstyle{remark} %Title and number in italic, body in normal font.

   \newcommand{\B}[1]{\mathscr{B}({#1})}
   
   \newcommand{\Bt}[1]{\mathscr{B}_2({#1})}

   \newcommand{\CK}{\mathcal{K}}

   \newcommand{\BR}{\mathbb{R}}
   
   \newcommand{\BN}{\mathbb{N}}

\usepackage{centernot}
\usepackage{todonotes}
\newcommand{\nll}{\centernot{\ll}}

 \newcommand{\numberthis}{\refstepcounter{equation}\tag{\theequation}}  
\makeatletter
\newcommand\footnoteref[1]{\protected@xdef\@thefnmark{\ref{#1}}\@footnotemark}
\makeatother

\setlength{\affilsep}{1em}

\newcommand{\emailaddress}[1]{\newline{\sf#1}}

\let\tr\relax 
\DeclareMathOperator{\tr}{Tr}

\DeclareMathOperator{\ran}{Ran}

\DeclareMathOperator{\supp}{Supp}

\DeclareMathOperator{\spn}{span}

\fontsize{12}{14}

\title{Relative Entropy via  Distribution of Observables}
\author[1]{George Androulakis}
\author[2]{Tiju Cherian John}

\affil[1]{University of South Carolina, Columbia, South Carolina, USA 
\emailaddress{giorgis@math.sc.edu}}
\affil[2]{The University of Arizona, Arizona, USA
\emailaddress{tijucherian@fulbrightmail.org}}

\begin{document}
\maketitle
\tableofcontents
\newpage
 \begin{abstract}We obtain formulas for Petz-R\'enyi and Umegaki relative entropy from the idea of distribution of a positive selfadjoint operator. Classical results on R\'enyi and Kullback-Leibler divergences are applied to obtain new results and new proofs for some known results about Petz-R\'enyi and Umegaki relative entropy.   Most important among these, is a 
necessary and sufficient condition
 for the finiteness of the Petz-R\'enyi $\alpha$-relative entropy. 
All of the results presented here are valid in both finite and infinite dimensions. In particular, these results are valid for states in Fock spaces and thus are applicable to continuous variable quantum information theory.

   \textbf{Keywords:} Distribution of a quantum observable, Petz-R\'enyi relative entropy, R\'enyi divergence, Nussbaum-Szko{\l}a distributions, Umegaki relative entropy\\
 \textbf{2020 Mathematics Subject classification:}
 Primary 81P17; Secondary 81P99.
 \end{abstract}
 \section{Introduction}
Relative entropic quantities like Petz-Rényi relative entropy and Umegaki relative entropy are used in quantum information theory to distinguish between states. The distribution of an observable is a fundamental notion in quantum probability. In this article, we prove that Petz-Rényi and Umegaki relative entropy of quantum states can be obtained using the idea of distribution of observables.
 
In a seminal article \cite{Nussbaum-Szkola-2009}, Nussbaum and Szkoła associated two probability distributions to any pair of finite dimensional quantum states and used it to study the Petz-Rényi relative entropy. These distributions are now known as Nussbaum-Szkoła distributions. In the article 
\cite{Androulakis-John-2022a}, these distributions have been generalized to infinite dimensions and were used to prove several results about quantum $f$-divergence. In the present article, we focus on two special cases of $f$-divergence: Petz-Rényi relative entropy and Umegaki  relative entropy. In Section \ref{sec:petz-renyi-umegaki} we specialize the general formula for $f$-divergence given in \cite{Androulakis-John-2022a} to the case of Petz-Rényi relative entropy and Umegaki  relative entropy. These formulas are provided in Theorem \ref{thm:quantum-reduces-to-classical}. We use these formulas in Sections  \ref{sec:petz-renyi-distribution} and \ref{sec:umegaki} to prove that the Petz-Rényi relative entropy and Umegaki  relative entropy can be obtained from the  distribution of observables. The results of this article hold in both finite and infinite dimensions. Hence our results are applicable to continuous variable quantum information theory \cite{Braunstein-van-Loock-2005, AdRaSa14, serafini2021}. In particular, Theorem \ref{thm:quantum-reduces-to-classical} of this article is used to find the precise range of $\alpha$ where the Petz-Rényi $\alpha$-relative entropy of certain class of gaussian states is finite \cite{Androulakis-John-2023}.

Another application of our formulas for Petz-Rényi relative entropy and Umegaki  relative entropy is the fact that these quantum entropic quantities (in both finite and infinite dimensions) coincide with the classical Rényi  and Kullback-Leibler divergence, respectively. This is a generalization of Nussbaum and Szkoła's result to infinite dimensions. This provides a general framework to obtain quantum results from the existing  literature on classical divergences. We illustrate this in Section \ref{sec:properties} by proving several results about Petz-Rényi and Umegaki  relative entropy using existing results about corresponding classical divergences. 

Now we describe some preliminaries and fix some notation for the rest of the article. 
Let $\CK$ be a complex Hilbert space with  $\dim \CK = \abs{\mathcal{I}}$, where $\mathcal{I} =\{1,2,\dots,n\}$ for some $n\in \BN$, or $\mathcal{I}=\BN$. Let $\rho$ and $\sigma$ be states on $\CK$  with spectral decomposition \begin{align}\label{eq:spectral-rho-sigma}\begin{split}
    \rho &= \sum_{i\in \mathcal{I}}r_i \ketbra{u_i}, \quad r_i\geq 0,\quad \sum_{i\in \mathcal{I}} r_i = 1,\quad \{u_i\}_{i\in \mathcal{I}} \text{ is an orthonormal basis of }  \mathcal{K};\\
    \sigma &= \sum_{j\in \mathcal{I}}s_j \ketbra{v_j}, \quad s_j\geq0, \quad \sum_{j\in \mathcal{I}} s_j = 1, \quad \{v_j\}_{j\in \mathcal{I}}\text{ is an orthonormal basis of } \mathcal{K}.
    \end{split}
\end{align}
The next definition of Nussbaum-Szkoła distribution is a direct generalization of the original definition in \cite{Nussbaum-Szkola-2009}.
 \begin{defn}\label{defn:nussbaum-szkola}
  (\textit{Nussbaum-Szko{\l}a distributions.})
 Define the Nussbaum-Szko{\l}a distribution $P$ and $Q$ associated with $\rho$ and $\sigma$ on $\mathcal{I}\times \mathcal{I}$ by,
\begin{align}
    \label{eq:P-and-Q}\begin{split}
  P(i,j) &= r_i\abs{\braket{u_i}{v_j}}^2,\\
  Q(i,j)& = s_j\abs{\braket{u_i}{v_j}}^2, \quad \forall (i,j)\in \mathcal{I}\times\mathcal{I}.
     \end{split}
\end{align}
 \end{defn}

  Now we define the $f$-divergence of two states $\rho$ and $\sigma$ as in \cite[Definition 2.1]{Hiai-2018}. A motivation for this definition can be seen in \cite[Equations 3.9 and 3.12]{Hiai-Mosonyi-2017}. It may be noted from the above references that the most general definition of $f$-divergence uses  Araki's relative modular operator $\Delta_{\rho,\sigma}$ \cite{araki1976relative}, but our setting only involves density operators in $\B{\CK}$. Hence, the following explicit spectral decomposition of $\Delta_{\rho,\sigma}$ may be used, \begin{align}\label{eq:delta-spectral}
    \Delta_{\rho,\sigma} = \sum_{\{i,j\,:\, r_i\neq 0, s_j \neq 0\}} r_is_j^{-1}\ketbra{X_{ij}},
\end{align}
where \begin{align}
    \label{eq:X-ell-j-defn}
    X_{ij} = \ketbra{u_i}{v_j} \in \Bt{\CK},\quad \forall i,j\in \mathcal{I}.
\end{align} A proof of \eqref{eq:delta-spectral} can be seen in \cite[Proposition B1]{Androulakis-John-2022a}.    Before defining the $f$-divergence, we need to fix a few notations and conventions. Let $\tau$ be any state on a Hilbert space $\B{\CK}$, then  $\Pi_{\tau}$ denote the orthogonal projection onto the support of $\sigma$. For a convex (or concave) function 
 $f:(0,\infty)\to \BR$, let
 \begin{align*}%\label{eq:convex-f-notations-1}
     f(0) := \lim_{t\downarrow 0}f(t),&\quad f'(\infty):= \lim_{t\rightarrow\infty}\frac{f(t)}{t}.
 \end{align*}
 \begin{defn}\label{defn:f-divergence}
 Let $\rho$ and $\sigma$ be states on a Hilbert space ${\CK}$. If $f:(0,\infty)\to\BR$ is a convex (or concave) function then the $f$-divergence $D_f(\rho||\sigma)$ of $\rho$ from $\sigma$ is defined as 
 \begin{align}\label{eq:f-divergence}
     D_f(\rho||\sigma)= \int_{0^+}^{\infty}f(\lambda)\mel{\sqrt{\sigma}}{\xi^{\Delta_{\rho,\sigma}}(\dd \lambda)}{\sqrt{\sigma}}_2+f(0)\tr \left(\sigma\Pi_\rho^\perp\right)+f'(\infty)\tr\left(\rho\Pi_\sigma^\perp\right),
 \end{align}
 where $\xi^{\Delta_{\rho,\sigma}}$ denote the spectral measure associated with Araki's relative modular operator 
 $\Delta_{\rho,\sigma}$ as discussed in \eqref{eq:delta-spectral}.
 \end{defn}
 Now we state the main result in \cite{Androulakis-John-2022a}, which is crucial for the present article. 
 \begin{thm}\label{thm:f-divergence}\cite{Androulakis-John-2022a}
Let $\rho,\sigma$ be as in (\ref{eq:spectral-rho-sigma}) and $P,Q$ denote the corresponding Nussbaum-Szko{\l}a distributions.  Let $f:(0,\infty)\to\BR$ be a convex (or concave) function  and $D_f(\rho||\sigma)$, $D_{f}(P||Q)$  respectively   denote the quantum $f$-divergence of $\rho$ from $\sigma$ and the classical $f$-divergence  of  $P$ from $Q$. Then \begin{align}
    \label{eq:f-divergence-quantum-equals-classical}
    D_f(\rho||\sigma) = D_f(P||Q).
\end{align}
\end{thm}
\begin{lem}\cite{Androulakis-John-2022a}\label{lem:f-divergence-Nussbaum-Szkola-formula} The $f$-divergence of the Nussbaum-Szko{\l}a distributions can be computed as
 \begin{equation}\label{eq:f-divergence-Nussbaum-Szkola-formula}
     D_f(P||Q) 
     = \sum\limits_{\left\{\substack{i,j\,:\\ r_i s_j \neq 0} \right\}}f\left({r_i}{s_j^{-1}}\right)s_j\abs{\braket{u_i}{v_j}}^2+f(0)Q(P=0)+f'(\infty)P(Q=0).
 \end{equation}
\end{lem} 
In several occasions below, we will use the following rearrangement trick for a sum of the form $\sum_{k}f(x_k)y_k$ with $y_k>0$ for all $k$.  Notice that if the sum of the negative terms in the series above is strictly bigger than $-\infty$ (or the sum of  positive terms in the series is strictly less than $\infty$), then any rearrangement of the series produces the same sum. In  particular, for $N \in \mathbb{N} \cup \{ \infty \}$, if the sum of the negative terms in the series $ \sum\limits_{k=1}^N f(x_k)y_k$ is strictly bigger than $-\infty$, (or the sum of  its positive terms is strictly less than $\infty$), we have
\begin{equation}\label{eq:trivial-identity}
     \sum\limits_{k=1}^N f(x_k)y_k =\sum\limits_{\lambda \in \{ x_k: k=1, \ldots , N \} }f(\lambda)
     \sum\limits_{\{\ell\,:\,x_{\ell}=\lambda\}}y_{\ell}. 
 \end{equation}
 Note that in the first sum on the right side of \eqref{eq:trivial-identity}, every element $x$ in the sequence $(x_k)_{k=1}^N$ appears exactly once even if the terms $x_k$ are not distinct.

  \section{Petz-R\'enyi and Umegaki Relative Entropy of States} \label{sec:petz-renyi-umegaki}
Having defined the $f$-divergences, the most economic way to define and study other entropic quantities is through $f$-divergences. In this section, we define the Petz-R\'enyi and Umegaki relative entropy using the $f$-divergences as in \cite{Hiai-2018}. Nevertheless, this definition coincides with other definitions seen in the literature, for example Araki in \cite{araki1977relative} and Berta, Scholz and Tomamichel in \cite{ Berta-Scholz-Tomamichel-2018}.
 \begin{defn}\label{defn:alpha-relative-entropy}
 \begin{enumerate}
     \item (\textit{Petz-R\'enyi $\alpha$-relative entropy.})  For $\alpha \in (0,1)\cup (1,\infty)$, the Petz-R\'enyi $\alpha$-relative entropy of two states $\rho$ given $\sigma$ is
     \begin{align}\label{eq:1.1}
    D_{\alpha}(\rho||\sigma) = 
    \frac{1}{\alpha-1}\log D_{f_\alpha}(\rho||\sigma),
\end{align}
where \[f_\alpha(\lambda) = \lambda^\alpha,\quad \lambda\in (0,\infty).\]

\item (\textit{Umegaki relative entropy.}) The Umegaki relative entropy of $\rho$ given $\sigma$ is defined as \begin{align}\label{eq:von-neumann}
    D(\rho||\sigma)= 
    D_f(\rho||\sigma),
\end{align}
where \[f(\lambda) = \lambda\log\lambda,\quad \lambda\in (0,\infty).\]
 \end{enumerate}
\end{defn}

 Now we state a lemma and two propositions which describe some relationships between the pairs $(\rho, \sigma)$ and $(P,Q)$  described in (\ref{eq:spectral-rho-sigma}) and (\ref{eq:P-and-Q}), respectively. Reader may refer to \cite{Androulakis-John-2022a} for proofs of these results.
 \begin{lem}\label{lem:support-condition}
Let $\rho$ and $\sigma$ be as in (\ref{eq:spectral-rho-sigma}). Then  $\supp \rho \subseteq \supp \sigma$ if and only if $s_j=0$ for some $j$ implies that for every $i$ at least one of the two quantities $\{\braket{u_i}{v_j}, r_i \}$ is equal to zero.  
\end{lem}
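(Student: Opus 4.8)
The plan is to unwind Definition~\ref{defn:support} and reduce the operator-theoretic inclusion to an elementary orthogonality condition on the two orthonormal bases $\{u_i\}$ and $\{v_j\}$. First I would record the two identifications that follow directly from the spectral decompositions in~\eqref{eq:spectral-rho-sigma} together with the identity $\supp\tau = (\ker\tau)^{\perp} = \overline{\ran}\,\tau$. Namely, $\supp\rho = \overline{\spn}\{u_i : r_i\neq 0\}$, since $u_i = r_i^{-1}\rho u_i\in\ran\rho$ whenever $r_i\neq 0$ while every vector of $\ran\rho$ already lies in this closed span; and, reading off the kernel instead, $\ker\sigma = \overline{\spn}\{v_j : s_j = 0\}$, because $\sigma x = \sum_j s_j\braket{v_j}{x}\ket{v_j}$ vanishes precisely when $\braket{v_j}{x}=0$ for every $j$ with $s_j\neq 0$, and $\{v_j\}$ being an orthonormal basis means the orthogonal complement of $\overline{\spn}\{v_j : s_j\neq 0\}$ is exactly $\overline{\spn}\{v_j : s_j = 0\}$.

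Next I would translate the inclusion. Because $\supp\sigma = (\ker\sigma)^{\perp}$, the inclusion $\supp\rho\subseteq\supp\sigma$ is equivalent to $\supp\rho\perp\ker\sigma$, i.e. to the assertion that every vector of $\overline{\spn}\{u_i : r_i\neq 0\}$ is orthogonal to every vector of $\overline{\spn}\{v_j : s_j = 0\}$. By sesquilinearity and continuity of the inner product, orthogonality of two closed subspaces can be tested on spanning sets, so this holds if and only if $\braket{u_i}{v_j} = 0$ for every pair $(i,j)$ with $r_i\neq 0$ and $s_j = 0$.

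Finally I would observe that this last condition is a verbatim restatement of the one in the lemma: demanding that $r_i\neq 0$ and $s_j=0$ force $\braket{u_i}{v_j}=0$ is the same as demanding that whenever $s_j=0$ holds for some $j$, then for every $i$ at least one of the quantities $\braket{u_i}{v_j}$, $r_i$ vanishes. Since every link in the chain above is an equivalence, both directions of the lemma are obtained at once. I do not anticipate a genuine obstacle here; the only point requiring a little care is the passage between the closed-span description of $\supp\rho$ and $\ker\sigma$ and the pointwise orthogonality of the basis vectors, which is legitimate precisely because $\{u_i\}$ and $\{v_j\}$ are orthonormal bases, so that the spans in question and their orthogonal complements are governed entirely by the index sets $\{i : r_i\neq 0\}$ and $\{j : s_j = 0\}$.
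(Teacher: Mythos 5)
Your proof is correct and follows essentially the same route as the paper: both arguments come down to identifying $\supp\rho$ with $\overline{\spn}\{u_i : r_i\neq 0\}$ and $\ker\sigma$ with $\overline{\spn}\{v_j : s_j = 0\}$ and then reducing the inclusion $\supp\rho\subseteq(\ker\sigma)^{\perp}$ to pairwise orthogonality of these spanning vectors. The only difference is presentational — the paper argues the two implications separately (an eigenvector argument for $\Rightarrow$ and a projection-onto-$\supp\rho$ argument for $\Leftarrow$), whereas you package both into a single chain of equivalences, which is if anything a little cleaner.
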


\begin{prop}\label{prop:rho-equals-sigma-P-equals-Q}
Let $\rho$ and $\sigma$ be as in (\ref{eq:spectral-rho-sigma})  and let $P$ and $Q$ be as in (\ref{eq:P-and-Q}) then 
\[P=Q \Leftrightarrow \rho=\sigma.\]
\end{prop}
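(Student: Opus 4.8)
The plan is to reduce both implications to a single pointwise characterization of the equality $P=Q$. Observe that the identity $P(i,j)=Q(i,j)$ reads $r_i\abs{\braket{u_i}{v_j}}^2 = s_j\abs{\braket{u_i}{v_j}}^2$, so for the pair $(i,j)$ it holds automatically when $\braket{u_i}{v_j}=0$, and when $\braket{u_i}{v_j}\neq 0$ it is equivalent, after cancelling the common positive factor, to $r_i=s_j$. Hence
\[ P=Q \iff \big(r_i=s_j \text{ whenever } \braket{u_i}{v_j}\neq 0\big). \]
I will prove both directions of the proposition through this equivalent condition, which I denote by $(\star)$.

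For the direction $\rho=\sigma \Rightarrow P=Q$, I would verify $(\star)$ directly. Suppose $\braket{u_i}{v_j}\neq 0$. Since $\rho=\sigma$, the vector $u_i$ is an eigenvector of $\sigma$ with eigenvalue $r_i$, while $v_j$ is an eigenvector of $\sigma$ with eigenvalue $s_j$. Eigenvectors of a selfadjoint operator attached to distinct eigenvalues are orthogonal; as $u_i$ and $v_j$ are not orthogonal, we must have $r_i=s_j$, which is precisely $(\star)$.

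For the converse $P=Q \Rightarrow \rho=\sigma$, I would use $(\star)$ to show that $\rho$ and $\sigma$ agree on the orthonormal basis $\{v_j\}$. Fix $j$ and expand $v_j$ in the basis $\{u_i\}$, so that $v_j = \sum_i \braket{u_i}{v_j}\ket{u_i}$ and therefore $\rho v_j = \sum_i r_i \braket{u_i}{v_j}\ket{u_i}$. In this last sum every term with $\braket{u_i}{v_j}\neq 0$ carries the coefficient $r_i=s_j$ by $(\star)$, while the remaining terms vanish, so that $\rho v_j = s_j\sum_i\braket{u_i}{v_j}\ket{u_i} = s_j v_j = \sigma v_j$. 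Since $\{v_j\}$ is an orthonormal basis, this forces $\rho=\sigma$.

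The computations are short because $\rho$ and $\sigma$ are bounded (indeed trace class), so all the series converge in norm and the rearrangements are legitimate. The only point that needs care, and the step I expect to be the main obstacle, is the converse direction: one must resist trying to match eigenvalues index by index (the bases $\{u_i\}$ and $\{v_j\}$ need not coincide, and the eigenvalues may be degenerate) and instead argue operatorially by applying $\rho$ to each $v_j$, where condition $(\star)$ collapses the sum cleanly. The forward direction is essentially the orthogonality of eigenspaces of a selfadjoint operator.
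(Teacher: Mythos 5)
Your proof is correct, but the converse direction takes a genuinely different and shorter route than the paper's. Both proofs start from the same pointwise observation that $P=Q$ is equivalent to your condition $(\star)$: $r_i=s_j$ whenever $\braket{u_i}{v_j}\neq 0$. The paper then extracts the full spectral data step by step: it first invokes Lemma \ref{lem:support-condition} (and its symmetric counterpart) to get $\supp\rho=\supp\sigma$, then argues that every nonzero $r_{i_0}$ must equal some $s_{j_0}$ (using that $\{v_j : s_j\neq 0\}$ is an orthonormal basis of $\supp\sigma$ and $u_{i_0}\in\supp\sigma$), so the spectra coincide, and finally matches the eigenspaces $R_{i_0}=S_{j_0}$ via two orthogonality claims. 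You bypass all of this by applying $\rho$ directly to each $v_j$ expanded in the basis $\{u_i\}$ and letting $(\star)$ collapse the sum to $\rho v_j=s_j v_j=\sigma v_j$; since $\rho$ is bounded the termwise application is legitimate, and agreement on an orthonormal basis gives $\rho=\sigma$. Your argument is more economical and avoids the eigenspace bookkeeping entirely; the paper's longer route has the side benefit of making the equality of supports, spectra, and eigenspaces explicit, but none of those intermediate facts are needed for the proposition itself. (For the forward direction the paper simply writes ``clearly''; your orthogonality-of-eigenspaces justification is exactly the reason it is clear.)
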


 \begin{prop}\label{prop:support-condition-iff-absolute-continuity}
Let $\rho$ and $\sigma$ be as in (\ref{eq:spectral-rho-sigma})  and let $P$ and $Q$ be as in (\ref{eq:P-and-Q}), then \[\supp \rho \subseteq \supp \sigma \Leftrightarrow P\ll Q.\]
\end{prop}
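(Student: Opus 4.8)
The plan is to reduce the statement directly to Lemma \ref{lem:support-condition}, which already characterizes the support inclusion $\supp \rho \subseteq \supp \sigma$ purely in terms of the eigendata $r_i$, $s_j$, $\braket{u_i}{v_j}$. The key observation is that, since $P$ and $Q$ are discrete measures on the countable set $\mathcal{I}\times\mathcal{I}$ (both dominated by the counting measure $\mu$, as noted in the proof of Theorem \ref{thm:quantum-reduces-to-classical}), the absolute continuity $P\ll Q$ is equivalent to the pointwise implication $q_{ij}=0 \Rightarrow p_{ij}=0$ for every $(i,j)\in\mathcal{I}\times\mathcal{I}$. This is because a $Q$-null set is exactly a set of atoms on which $Q$ vanishes, and $P\ll Q$ asks precisely that $P$ also vanish there.

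First I would unpack this pointwise condition using $p_{ij}=r_i\abs{\braket{u_i}{v_j}}^2$ and $q_{ij}=s_j\abs{\braket{u_i}{v_j}}^2$. If $\braket{u_i}{v_j}=0$ then both $p_{ij}$ and $q_{ij}$ vanish, so the implication holds trivially for such pairs and they carry no information. Thus $P\ll Q$ reduces to the requirement that, for every pair $(i,j)$ with $\braket{u_i}{v_j}\neq 0$, the vanishing of $q_{ij}$ forces the vanishing of $p_{ij}$. For such pairs one has $q_{ij}=0$ if and only if $s_j=0$, and $p_{ij}=0$ if and only if $r_i=0$. Hence $P\ll Q$ is equivalent to the following statement: whenever $s_j=0$ and $\braket{u_i}{v_j}\neq 0$, then $r_i=0$.

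Then I would observe that this last statement is a verbatim restatement of the condition appearing in Lemma \ref{lem:support-condition}, namely that $s_j=0$ for some $j$ implies that for every $i$ at least one of the two quantities $\{\braket{u_i}{v_j},\, r_i\}$ is equal to zero. Invoking that lemma, this condition is equivalent to $\supp\rho\subseteq\supp\sigma$, which establishes the desired equivalence and completes the proof.

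I do not anticipate a serious obstacle here, as the argument is essentially a bookkeeping translation. The only points requiring a little care are the justification that discrete absolute continuity amounts to the pointwise implication on atoms, and the clean isolation of the trivial case $\braket{u_i}{v_j}=0$ so that the remaining nontrivial content lines up exactly with the hypothesis of Lemma \ref{lem:support-condition}.
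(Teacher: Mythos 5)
Your proposal is correct and follows essentially the same route as the paper: both reduce $P\ll Q$ to the pointwise condition that $s_j=0$ and $\braket{u_i}{v_j}\neq 0$ force $r_i=0$, and then identify this with the support inclusion. The only cosmetic difference is that you cite Lemma \ref{lem:support-condition} for the last step, while the paper re-derives that equivalence inline via a chain of statements about $\ker\sigma$ and $\supp\rho$.
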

Now we proceed to prove an important result in this article. It states that the Petz-R\'enyi $\alpha$-relative entropy and the Umegaki relative entropy of two states $\rho$ and $\sigma$ are the same as the corresponding classical divergences of the Nussbaum-Szko{\l}a distributions. Also it provides a formula to compute these quantities.  Previously, the result was known for finite dimensions by the work of Nussbaum and Szko{\l}a in \cite{Nussbaum-Szkola-2009}.
    In the infinite dimensional setting, the result about Petz-R\'enyi relative entropy was only known for the special case of gauge invariant and translation invariant gaussian states and of orders of $\alpha$ in $(0,1)$ by the work of Mosonyi in
    \cite{Mosonyi-2009}. We prove it in general for all possible orders of $\alpha$ and all states (not only gaussian states) which are given via trace duality using a density operator. 
  
  Towards our main goal, we will compute the the R\'enyi divergence and the Kullback-Leibler divergence of $P$ from $Q$ 
  (Definition \ref{defn:renyi-divergence-Kullback})  in the next lemma. 
\begin{lem}\label{lem:formula-renyi-alpha-K-L} Let $\rho$ and $\sigma$ be as in \eqref{eq:spectral-rho-sigma}. Let $P$ and $Q$ denote the corresponding Nussbaum-Szko{\l}a distributions as in \eqref{eq:P-and-Q}.
\begin{enumerate}
    \item\label{item:renyi} For $\alpha\in (0,1)\cup (1,\infty)$
 the R\'enyi divergence $D_{\alpha}(P||Q)$  is given by \begin{equation}\label{eq:formula-renyi-alpha-divergence}
     D_{\alpha}(P||Q)=\frac{1}{\alpha-1} \log \sum\limits_{i,j} r_i^\alpha s_j^{1-\alpha}\abs{\braket{u_i}{v_j}}^2, 
 \end{equation}
 where for $\alpha>1$, we adopt the conventions $0^{1-\alpha} = \infty$ and $0\cdot \infty = 0$. In particular, \begin{align}\label{eq:renyi-divergence-equals-infty-1}
     D_{\alpha}(P||Q) = \infty, \textnormal{ if } P\nll Q.
 \end{align}
 \item \label{item:K-L} The Kullback-Leibler divergence $D(P||Q)$ is given by 
 \begin{equation}\label{eq:formula-kullback-leibler}
     D(P||Q)=\sum\limits_{i,j} r_i\abs{\braket{u_i}{v_j}}^2\log\left( \frac{r_i}{s_j}\right),
 \end{equation}
 where we use the conventions that $0\log (0/q) = 0$, for $q\geq0$ and $p\log(p/0)=\infty$ if $p>0$.
 In particular, \begin{align}\label{eq:kullback-leibler-equals-infty-1}
     D(P||Q) = \infty \textnormal{ if } P\nll Q.
 \end{align}
\end{enumerate}
\end{lem}
\begin{proof}
\ref{item:renyi}. By Definition \ref{defn:renyi-divergence-Kullback}, for $\alpha\in (0,1)\cup (1,\infty)$,    we take  $f_{\alpha}(\lambda) = \lambda^{\alpha},$ for $\lambda\in (0,\infty)$ to compute the R\'enyi divergence. It is enough to prove that the classical $f$-divergence satisfies
\begin{equation}\label{eq:classical-f-divergence-Nussbaum-Szkola}
    D_{f_{\alpha}}(P||Q) = \sum\limits_{i,j} r_i^\alpha s_j^{1-\alpha}\abs{\braket{u_i}{v_j}}^2,
\end{equation}
 where for $\alpha>1$, we adopt the conventions $0^{1-\alpha} = \infty$ and $0\cdot \infty = 0$.

We will use equation \eqref{eq:f-divergence-Nussbaum-Szkola-formula} to compute  $D_{f_{\alpha}}(P||Q)$.
At first we compute the second and the third terms in \eqref{eq:f-divergence-Nussbaum-Szkola-formula}. We have $f_\alpha(0)=0$, therefore, the second term \begin{align}\label{eq:2nd-term}
    f_\alpha(0)Q(P=0)=0
\end{align} in \eqref{eq:f-divergence-Nussbaum-Szkola-formula}. To compute the term $f_\alpha'(\infty)P(Q=0)$ in \eqref{eq:f-divergence-Nussbaum-Szkola-formula} we consider three cases. Case (i) $\alpha\in (0,1)$; Case (ii) $\alpha\in (1,\infty)$ and $P\ll Q$;  Case (iii) $\alpha\in (1,\infty)$ and $P\nll Q$. In Case (i), \[f_\alpha'(\infty):= \lim_{\lambda\rightarrow\infty}\frac{f_{\alpha}(\lambda)}{\lambda}=\lim_{\lambda\rightarrow\infty}\frac{1}{\lambda^{(1-\alpha)}} = 0.\] Hence $f_\alpha'(\infty)P(Q=0)=0$ in this case. In Cases (ii) and (iii) we have $f_\alpha'(\infty) = \infty$. In Case (ii) $P(Q=0)=0$ hence  $f_\alpha'(\infty)P(Q=0)=0$ in Case (ii) as well. In Case (iii),  there exists $(i,j)$ such that $Q(i,j)=0$ but $P(i,j)\neq 0$ therefore we have $f_\alpha'(\infty)P(Q=0)=\infty$ in this case. Finally, 
\begin{align}\label{eq:3rd-term}
    f_\alpha'(\infty)P(Q=0) =\begin{cases}
    0 &\text{if } \alpha\in (0,1)\\
    0& \alpha\in (1,\infty)\text{ and } P\ll Q\\
    \infty & \alpha\in (1,\infty)\text{ and } P\nll Q.
    \end{cases}
\end{align}
Now we compute the first term in \eqref{eq:f-divergence-Nussbaum-Szkola-formula}, \begin{align}\label{eq:1st-term}
   \sum\limits_{\left\{\substack{i,j\,:\\ r_i s_j \neq 0}\right\}}f_\alpha\left({r_i}{s_j^{-1}}\right)s_j\abs{\braket{u_i}{v_j}}^2  =\sum\limits_{\left\{\substack{i,j\,:\\ r_i s_j \neq 0}\right\}} r_i^\alpha s_j^{1-\alpha}\abs{\braket{u_i}{v_j}}^2 .
\end{align}
Note that if $\alpha\in (0,1)$,\begin{equation}\label{eq:1st-term-final}
    \sum\limits_{\left\{\substack{i,j\,:\\ r_i s_j \neq 0}\right\}} r_i^\alpha s_j^{1-\alpha}\abs{\braket{u_i}{v_j}}^2 = \sum\limits_{i,j} r_i^\alpha s_j^{1-\alpha}\abs{\braket{u_i}{v_j}}^2.
\end{equation}
 Also, by Proposition \ref{prop:support-condition-iff-absolute-continuity} and Lemma \ref{lem:support-condition}, the  equation above is satisfied whenever $P\ll Q$. Hence by \eqref{eq:2nd-term} and \eqref{eq:3rd-term},  we have proved \eqref{eq:classical-f-divergence-Nussbaum-Szkola} for the cases when either $\alpha\in (0,1)$, or $\alpha\in (1,\infty)$ and $P\ll Q$. If  $P\nll Q$ then there exists $(i,j)$ such that $s_j=0$ but $r_i\abs{\braket{u_i}{v_j}}^2\neq 0$, hence when $\alpha\in (1,\infty)$ under our conventions that  $0^{1-\alpha} = \infty$ and $0\cdot \infty = 0$, note that the sum  on the right side of \eqref{eq:1st-term-final} above is equal to $\infty$, even though \eqref{eq:1st-term-final} may not be valid in this case. 
 Thus when   $\alpha\in (1,\infty)$ and $P\nll Q$, by \eqref{eq:2nd-term} and \eqref{eq:3rd-term}, we have \begin{align*}
     D_{f_{\alpha}}(P||Q) &= \sum\limits_{\left\{\substack{i,j\,:\\ r_i s_j \neq 0}\right\}} r_i^\alpha s_j^{1-\alpha}\abs{\braket{u_i}{v_j}}^2+ f(0)Q(P=0)+f_{\alpha}'(\infty)P(Q=0)\\
     &=  \sum\limits_{\left\{\substack{i,j\,:\\ r_i s_j \neq 0}\right\}} r_i^\alpha s_j^{1-\alpha}\abs{\braket{u_i}{v_j}}^2+ 0+\infty\\
     & =\infty\\ &= \sum\limits_{i,j}r_i^\alpha s_j^{1-\alpha}\abs{\braket{u_i}{v_j}}^2.
 \end{align*}
 Thus the first part of the theorem is proved.
 
 \ref{item:K-L}. By Definition \ref{defn:renyi-divergence-Kullback},    we take  $f(\lambda) =\lambda\log\lambda$, for $\lambda\in (0,\infty)$ to compute the Kullback-Leibler divergence. Once again we will use equation \eqref{eq:f-divergence-Nussbaum-Szkola-formula} to compute  $D_{f}(P||Q)$. Since $\lim_{\lambda\rightarrow 0}f(\lambda) = 0$, we see that \begin{align}\label{eq:2nd-term-KL}
    f(0)Q(P=0)=0.
\end{align}
Furthermore, \eqref{eq:3rd-term} with $f_\alpha$ replaced by $f$ is satisfied in this case as well. To compute the first term in \eqref{eq:f-divergence-Nussbaum-Szkola-formula}, note that  \begin{align}\label{eq:1st-term-K-L}
   \sum\limits_{\left\{\substack{i,j\,:\\ r_i s_j \neq 0}\right\}}f\left({r_i}{s_j^{-1}}\right)s_j\abs{\braket{u_i}{v_j}}^2  &=\sum\limits_{\left\{\substack{i,j\,:\\ r_i s_j \neq 0}\right\}} r_i s_j^{-1}\left(\log r_i s_j^{-1}\right) s_j\abs{\braket{u_i}{v_j}}^2\nonumber\\
   &= \sum\limits_{\left\{\substack{i,j\,:\\ r_i s_j \neq 0}\right\}} r_i \left(\log r_i s_j^{-1}\right) \abs{\braket{u_i}{v_j}}^2.
\end{align}
Now a similar argument as in the case of R\'enyi divergence  completes the proof in this case as well.

\end{proof}
 Now we have the theorem that was promised before the previous lemma. 
  \begin{thm}\label{thm:quantum-reduces-to-classical}
 Let $\rho$ and $\sigma$ be as in (\ref{eq:spectral-rho-sigma}).  Let $D_{\alpha}(P||Q)$   and $D(P||Q)$ respectively denote the R\'enyi divergence of order $\alpha$ and the Kullback-Leibler divergence of the Nussbaum-Szko{\l}a distributions $P$ and $Q$ associated with $\rho$ and $\sigma$. Then, 
 \begin{enumerate}
     \item \label{item:quantum-reduces-classical-1} the Petz-R\'enyi $\alpha$-relative entropy of $\rho$ given $\sigma$ is equal to the R\'enyi divergence of order $\alpha$ of $P$ from $Q$, for every $\alpha\in (0,1)\cup(1,\infty)$, i.e.,
         \begin{equation}\label{eq:main-result}
  D_{\alpha}(\rho||\sigma) = D_{\alpha}(P||Q) ,\quad \forall \alpha\in (0,1)\cup(1,\infty);
\end{equation}
\item \label{item:quantum-reduces-classical-2} the Umegaki relative entropy of $\rho$ given $\sigma$ is equal to the Kullback-Leibler divergence of $P$ from $Q$, i.e.,
 \begin{equation}\label{eq:main-result-KL}
  D(\rho||\sigma) = D(P||Q).
\end{equation}
\end{enumerate}
Moreover, we have the formulae \begin{align}\label{eq:alpha-relative-entropy}
   D_{\alpha}(\rho||\sigma)&= \frac{1}{\alpha-1} \log \sum\limits_{i,j} r_i^\alpha s_j^{1-\alpha}\abs{\braket{u_i}{v_j}}^2, \quad \forall \alpha \in (0,1) \cup (1,\infty),
\end{align}
where for $\alpha>1$, we adopt the conventions $0^{1-\alpha} = \infty$ and $0\cdot \infty = 0$, and \begin{align}\label{eq:von-neumann-relative-entropy}
       D(\rho||\sigma)&= \sum\limits_{i,j} r_i\abs{\braket{u_i}{v_j}}^2\log\left( \frac{r_i}{s_j}\right),
\end{align}
with the conventions that $0\log (0/q) = 0$, for $q\geq0$ and $p\log(p/0)=\infty$ if $p>0$.
 \end{thm}
 \begin{proof}Let $f_{\alpha}$ and $f$ be as in the proof of Lemma \ref{lem:formula-renyi-alpha-K-L}. By Theorem \ref{thm:f-divergence} and Lemma \ref{lem:formula-renyi-alpha-K-L} we have \begin{align}\label{eq:formula-f-div-equals-classical}
     \begin{split}
         D_{f_\alpha}(\rho||\sigma)&=D_{f_{\alpha}}(P||Q) = \sum\limits_{i,j} r_i^\alpha s_j^{1-\alpha}\abs{\braket{u_i}{v_j}}^2,\\
     D_{f}(\rho||\sigma) &=D(P||Q)\phantom{...}=\sum\limits_{i,j} r_i\abs{\braket{u_i}{v_j}}^2\log\left( \frac{r_i}{s_j}\right),
     \end{split}
 \end{align} which complete the proof.
 \end{proof}
 \begin{rmk}\label{rmk:quamtum-classical-measurement}
In the setting of Theorem \ref{thm:quantum-reduces-to-classical} and its proof, for $i,j\in \mathcal{I}$ define $A_{ij} = \braket{u_i}{v_j}\ketbra{u_i}{v_j}$. Then $\sum_{i,j}A_{ij}A_{ij}^\dagger = I = \sum_{i,j}A_{ij}^\dagger A_{ij}$. Hence both $\{A_{ij}A_{ij}^\dagger\}$ and $\{A_{ij}^\dagger A_{ij}\}$ are POVM's. Furthermore, it may be noted that $P(i,j) = \tr \rho A_{ij}A_{ij}^\dagger$ and $Q(i,j) =  \tr \sigma A_{ij}^\dagger A_{ij}$. Thus the probability measures $P$ and $Q$ in the previous theorem are precisely those measures that are obtained by measuring $\rho$ and $\sigma$, respectively in $\{A_{ij}A_{ij}^\dagger\}$ and  $\{A_{ij}^\dagger A_{ij}\}$.
\end{rmk}
 \section{Petz-R\'enyi Relative Entropy using Pushforward Measure}\label{sec:petz-renyi-distribution}
 The idea of distribution of a quantum random variable (observable or self adjoint operator) with respect to a state is as old as quantum mechanics itself.
 In this section we exploit a slight modification of this idea, i.e.,  the \emph{pushforward of a positive compact operator with respect to a selfadjoint operator} to describe relative entropies. This terminology is motivated by the fact that the distribution of a classical random variable is simply the push forward of the probability measure with respect to the random variable. This approach has similarities with the approach of Haagerup in defining weights on noncommutative $L^p$-spaces \cite[Example 1.2 and Proposition 1.11]{Haagerup-1979-1}. More details on relative entropies in terms Haagerup's $L^p$-spaces is provided in Appendix A of \cite{Hiai-2018}. 
 \begin{defn}\label{sec:states-observables-1}\label{defn:moments} Let
  $X$ be a (possibly unbounded) selfadjoint operator defined on $D(X)\subseteq \CK$  with spectral measure $\xi^{X}$  and let $\tau$ be a bounded positive  operator on   $\CK$. Define a positive measure $\mu^{\tau,X}$ on the Borel $\sigma$-algebra $\mathcal{B}_{\mathbb{R}}$  of $\mathbb{R}$, by  \[\mu^{\tau,X}(E):= \tr \left\{\tau^{1/2}\xi^{X}(E)\tau^{1/2}\right\},\quad \forall E \in \mathcal{B}_{\mathbb{R}}.\]
  Then $\mu^{\tau,X}$ is called the \emph{pushforward  of  $\tau$ with respect to $X$}. If $\tau$ is a state, then 
  $\mu^{\tau,X}$ is a probability distribution and it is called \emph{distribution of  $X$ with respect to $\tau$}.
\end{defn}
\begin{rmk} \label{rmk:distribution}Let $\tau$ be a positive compact operator on $\CK$, with spectral decomposition
\begin{equation}
\label{eq:15}
\tau = \sum_{i}p_{i}\ketbra{u_i}{u_{i}},
\end{equation}
 where $p_{i}\geq 0$ and  $\{u_i\}$  an  orthonormal basis in $\CK$. In this case, \begin{equation}
\label{eq:9}
\mu^{\tau,X}(E) = \sum\limits_{i}^{}p_i \left\langle u_i|\xi^X(E)|u_i \right\rangle,
 \end{equation} 
 for any Borel set $E \subseteq \mathbb{R}$.  Thus $\mu^{\tau,X}$ is supported inside the spectrum of $X$. 
 Furthermore, if $X$ has a spectral decomposition of the form \[X=\sum_{j}x_j\ketbra{v_j},\] where $x_j\geq 0$ and $\{v_j\}$ is an orthonormal basis, then $\mu^{\tau,X}$ is supported on the eigenvalues $\{x_j\}$ and \begin{align}\label{eq:mu-tau-x}
     \mu^{\tau,X}\{x_j\} &=  \sum\limits_{i}^{}p_i \left\langle u_i\left\lvert\left(\sum\limits_{\{k\,:\,x_j=x_k\}}\ketbra{v_k}\right)\right\lvert u_i \right\rangle\nonumber\\
     &=\sum\limits_{i}p_i\sum\limits_{\{k\,:\,x_j=x_k\}}\abs{\braket{u_i}{v_k}}^2.
 \end{align}
\end{rmk}
When $\alpha>1$, we note that $\sigma^{(1-\alpha)}$ is defined as the  pseudo-inverse (also known as Moore-Penrose inverse) of $\sigma$ raised to the power $(\alpha-1)$.  If the spectral decomposition of $\sigma$ is $\sum_{j}s_j\ketbra{v_j}$ with $s_j\geq0$, then \begin{align}\label{eq:pseudo-inverse}
    \sigma^{(1-\alpha)} := \sum_{\{j\,:\, s_j\neq 0\}}s_j^{(1-\alpha)}\ketbra{s_j}.
\end{align}
It may be noted from the spectral theorem  \cite[Theorem 12.4]{Par12} that the pseudo-inverse as defined above is a selfadjoint operator (not necessarily bounded) because its spectral measure is supported on the  real line. Furthermore, by (\ref{eq:pseudo-inverse}), we also have $\sigma^{(1-\alpha)}$ is a positive operator.

 \begin{thm}\label{thm:Petz-renyi-equals-haagerup}
  The Petz-R\'enyi relative entropy satisfies \begin{equation}
       D_{\alpha}(\rho||\sigma)=\begin{cases} \frac{1}{\alpha-1}\log\left(\int\limits_{0}^\infty\lambda \mu^{\rho^\alpha, \sigma^{(1-\alpha)}}(\dd \lambda)\right), &\begin{array}{l}\alpha \in (0,1) \text{ or } \\ \alpha \in (1, \infty) \text{ and } \supp \rho \subseteq \supp \sigma;\end{array}\\ 
    & \\ 
       \infty, &\text{otherwise},
       \end{cases}
   \end{equation} 
   where $\mu^{\rho^\alpha, \sigma^{(1-\alpha)}}$ is the pushforward of $\rho^\alpha$   with respect to $\sigma^{(1-\alpha)}$  as in Definition \ref{sec:states-observables-1} and when $\alpha>1$, $\sigma^{(1-\alpha)}$ is taken as the pseudo-inverse of $\sigma^{(\alpha-1)}$ (see \eqref{eq:pseudo-inverse}). Consequently, \begin{equation}\label{eq:d-alpha-finite}
       D_{\alpha}(\rho||\sigma)<\infty \text{ if and only if }\int\limits_{0}^\infty\lambda \mu^{\rho^\alpha, \sigma^{(1-\alpha)}}(\dd \lambda)<\infty,
   \end{equation} 
   $\alpha \in (0,1)$ or  $\alpha \in (1, \infty)$  and $\supp \rho \subseteq \supp \sigma$.
 \end{thm}
 \begin{proof}
 Let $\rho$ and $\sigma$ be as in \eqref{eq:spectral-rho-sigma}. Then \begin{align}\label{eq:spectral-rho-sigma-alpha}\begin{split}
    \rho^\alpha &= \sum_{i\in \mathcal{I}}r_i^\alpha \ketbra{u_i}, \quad \quad\phantom{......} r_i\geq 0,\quad \sum_i r_i = 1,\quad \{u_i\}_{i} \text{ is an orthonormal basis};\\
    \sigma^{(1-\alpha)} &= \sum_{\{j\,:\,s_j\neq0\}}s_j^{1-\alpha} \ketbra{v_j}, \quad s_j>0, \quad \sum_j s_j = 1, \quad \{v_j\}_j\text{ is an orthonormal set},
    \end{split}
\end{align}
Now by putting $\tau = \rho^{\alpha}$, $X = \sigma^{(1-\alpha)}$, $x_j = s_j^{1-\alpha}$, $p_i=r_i^{\alpha}$ in \eqref{eq:mu-tau-x}, we get
\begin{align*}
    \mu^{\rho^\alpha, \sigma^{(1-\alpha)}}\{s_j^{1-\alpha}\}&= \sum\limits_{i}r_i^\alpha\sum\limits_{\{k\,:\,s_j=s_k\}}\abs{\braket{u_i}{v_k}}^2, \quad \forall j \text{ such that } s_j\neq0,\\
     \mu^{\rho^\alpha, \sigma^{(1-\alpha)}}\{0\}&=\sum\limits_{i}r_i^\alpha\sum\limits_{\{k\,:\,s_k=0\}}\abs{\braket{u_i}{v_k}}^2.
\end{align*}
Now  for $\alpha\in (0,1)\cup(1,\infty)$, using the convention $0\cdot\infty = 0$,  we avoid integrating at $0$ and obtain
\begin{align*}
    \int\limits_{0}^\infty\lambda \mu^{\rho^\alpha, \sigma^{(1-\alpha)}}(\dd \lambda) &= \sum\limits_{\{j\,:\,s_j\neq0\}}s_j^{1-\alpha}\sum\limits_{i}r_i^{\alpha}\sum\limits_{\{k\,:\,s_j=s_k\}}\abs{\braket{u_i}{v_k}}^2\\
    &=\sum\limits_{i}r_i^{\alpha}\sum\limits_{\{j\,:\,s_j\neq0\}}s_j^{1-\alpha}\sum\limits_{\{k\,:\,s_j=s_k\}}\abs{\braket{u_i}{v_k}}^2\\
    &=\sum\limits_{i}r_i^{\alpha}\sum\limits_{\{j\,:\,s_j\neq0\}}s_j^{1-\alpha}\abs{\braket{u_i}{v_j}}^2 \quad\left(\text{by } \eqref{eq:trivial-identity}  \right)\\
    &=\sum_{\{i,j\,:\,s_j\neq0\}}r_i^{\alpha}s_j^{1-\alpha}\abs{\braket{u_i}{v_j}}^2.
\end{align*}
Now, if we assume $\supp \rho\subseteq \supp \sigma$ in the case $\alpha>1$, we get \[ \int\limits_{0}^\infty\lambda \mu^{\rho^\alpha, \sigma^{(1-\alpha)}}(\dd \lambda) =\sum_{i,j}r_i^{\alpha}s_j^{1-\alpha}\abs{\braket{u_i}{v_j}}^2, \]
where for $\alpha>1$, we use Lemma \ref{lem:support-condition},  and   adopt the conventions $0^{1-\alpha} = \infty$ and $0\cdot \infty = 0$. The last sum above is same as $D_{f_\alpha}(\rho||\sigma)$ by \eqref{eq:formula-f-div-equals-classical} in the case $\alpha \in (0,1$)  or  $\alpha\in (1,\infty)$ with $\supp \rho \subseteq \supp \sigma$, where $f_\alpha (\lambda) = \lambda^\alpha$.
 \end{proof}
\begin{rmk}
In the previous theorem, we integrated the function $\lambda$ on $[0,\infty)$ using the measure $\mu^{\rho^{\alpha},\sigma^{(1-\alpha)}}$ which is supported on the the set $\{s_j^{1-\alpha}\,:\, s_j\neq 0\}\cup \{0\}$. Nevertheless the result can be obtained by integrating the function $\lambda^{1-\alpha}$ on the same interval  using the measure $\mu^{\rho^{\alpha},\sigma}$.
\end{rmk}
\section{Umegaki Relative Entropy from Distribution of Observables}\label{sec:umegaki}
A theorem similar to the following one was obtained in \cite[Theorem 20]{luczak2019relative} in a  more general setting using a different proof. In the setting of $\B{\CK}$, as we shall see now, it follows easily from the idea of distribution.   Recall that  if the spectral decomposition of an operator $X$ is $\sum_{j}x_j\ketbra{f_j}$ with $x_j\geq0$, then \begin{align}\label{eq:pseudo-log}
    -\log X := \sum_{\{j\,:\, x_j\neq 0\}}-\log x_j\ketbra{f_j},
\end{align}
where we follow Umegaki \cite[Pages 65-66]{umegaki-1962} for defining the functional calculus of the logarithm.
\begin{thm}\label{thm:von-neumann-new-formula}
The Umegaki entropy $D(\rho||\sigma)$ satisfies \begin{align}
    \label{eq:vn-entropy}
    D(\rho||\sigma) = \int\limits_{0}^{\infty} \lambda  \mu^{\rho, -\log\sigma}(\dd \lambda)- \int\limits_{0}^{\infty} \lambda  \mu^{\rho, -\log\rho}(\dd \lambda),
\end{align}
when $\supp \rho \subseteq \supp \sigma$ and at least one of the two quantities 

$\int\limits_{0^+}^{\infty} \lambda  \mu^{\rho, -\log\sigma}(\dd \lambda)$ and $\int_{0^+}^{\infty} \lambda  \mu^{\rho, -\log\rho}(\dd \lambda)$ is finite, with the conventions $\frac{0}{0}=0$, $\frac{x}{0}=\infty$ if $x>0$, and $0\cdot\infty = 0$.
\end{thm}
\begin{proof}
Let $\rho$ and $\sigma$ be as in \eqref{eq:spectral-rho-sigma}. We have by \eqref{eq:mu-tau-x}, \[\mu^{\rho,-\log \sigma}\{-\log s_j\} = \sum_{i}r_i\sum_{\{k\,:\,s_j=s_k\}}\abs{\braket{u_i}{v_k}}^2 \quad \forall j \text{ such that } s_j \neq 0, \] and $\mu^{\rho,-\log\sigma}$ is zero everywhere else except possibly at $\{0\}$. Note that when $\supp \rho \subseteq \supp \sigma$, by Lemma \ref{lem:support-condition}, equation \eqref{eq:trivial-identity} and the conventions we have, \begin{align*}
    \int\limits_{0}^{\infty} \lambda  \mu^{\rho, -\log\sigma}(\dd \lambda)
    &=\sum_{\{j\,:\,s_j\neq 0\}}\left(-\log s_j\right)\sum_{i}r_i\sum_{\{k\,:\,s_j=s_k\}}\abs{\braket{u_i}{v_k}}^2 \\
   &= \sum_{\{i,j\,:\,s_j\neq 0\}}r_i\left(-\log s_j\right)\abs{\braket{u_i}{v_j}}^2 \\
    &= \sum\limits_{i,j} -r_i\left(\log s_j\right)\abs{\braket{u_i}{v_j}}^2. \numberthis \label{eq:first-term-vn-entropy}
\end{align*}
Similar to the above situation, by \eqref{eq:mu-tau-x} we have \begin{align*}
    \mu^{\rho,-\log \rho}\{-\log r_i\} &= \sum_{j}r_j\sum_{\{k\,:\,r_i=r_k\}}\abs{\braket{u_j}{u_k}}^2\\ 
    &=\sum_{j}r_j\sum_{\{k\,:\,r_i=r_k\}}\delta_{jk} \quad (\text{where } \delta_{jk} \text{ denotes the Kronecker-}\delta \text{ function} ) \\
    &=\sum_{\{k\,:\,r_i=r_k\}}r_k, \quad \forall i \text{ such that } r_i \neq 0,\\
\end{align*} and $\mu^{\rho,-\log\rho}$ is zero everywhere else except possibly at $\{0\}$. Since $\sum_j\abs{\braket{u_i}{v_j}}^2= \norm{u_i}=1$ for all $i$, we have, \begin{align*}
    \int\limits_{0}^{\infty} \lambda  \mu^{\rho, -\log\rho}(\dd \lambda) &= \sum_{\{i\,:\,r_i\neq 0\}}\left(-\log r_i\right)\sum_{\{k\,:\,r_i=r_k\}}r_k\\
    &= \sum\limits_{\{i\,:\,r_i\neq 0\}}-r_i\log r_i \quad (\text{by } \eqref{eq:trivial-identity})\\
     &= \sum\limits_{i}-r_i\log r_i \quad (\text{since } 0\cdot \infty = 0)\\
     &=\sum\limits_{i,j}-r_i\log r_i\abs{\braket{u_i}{v_j}}^2.\numberthis \label{eq:second-term-vn-entropy}
\end{align*}
 If at least one of  the quantities $\int\limits_{0}^{\infty} \lambda  \mu^{\rho, -\log\sigma}(\dd \lambda)$ and $ \int\limits_{0}^{\infty} \lambda  \mu^{\rho, -\log\rho}(\dd \lambda)$ is  finite, we can combine the two summations in (\ref{eq:first-term-vn-entropy}) and (\ref{eq:second-term-vn-entropy}), and write  \begin{align}\label{eq:vn-entropy-formula-before-clubbing the terms}
 & \int\limits_{0}^{\infty} \lambda  \mu^{\rho, -\log\sigma}(\dd \lambda)- \int\limits_{0}^{\infty} \lambda  \mu^{\rho, -\log\rho}(\dd \lambda)\nonumber\\
 &\phantom{...........}= \sum\limits_{i,j}-r_i(\log s_j)\abs{\braket{u_i}{v_j}}^2 -\sum\limits_{i,j}-r_i(\log r_i)\abs{\braket{u_i}{v_j}}^2\nonumber\\
    &\phantom{...........}=\sum\limits_{i,j} r_i(\log r_i)\abs{\braket{u_i}{v_j}}^2-\sum\limits_{i,j}r_i(\log s_j)\abs{\braket{u_i}{v_j}}^2 \nonumber\\
    &\phantom{...........}=\sum\limits_{i,j} \left(r_i(\log r_i)\abs{\braket{u_i}{v_j}}^2-r_i(\log s_j)\abs{\braket{u_i}{v_j}}^2\right)\nonumber.
    \end{align}
    By using Lemma \ref{lem:support-condition}, and the conventions $\frac{0}{0}=0$, $\frac{x}{0}=\infty$ if $x>0$, $0\cdot\infty = 0$,   we get  
    \begin{align*}
       \int\limits_{0}^{\infty} \lambda  \mu^{\rho, -\log\sigma}(\dd \lambda)- \int\limits_{0}^{\infty} \lambda  \mu^{\rho, -\log\rho}(\dd \lambda)
    & =\sum_{i,j}r_i\abs{\braket{u_i}{v_j}}^2\log\frac{r_i}{s_j}\\
    &= \sum_{i,j}r_i\abs{\braket{u_i}{v_j}}^2\log\frac{r_i\abs{\braket{u_i}{v_j}}^2}{s_j\abs{\braket{u_i}{v_j}}^2}\\
    &= D(P||Q)\\
    &=D(\rho||\sigma)
\end{align*}where $D(P||Q)$ is the Kullback-Leibler divergence   of the Nussbaum-Szko\l a distributions $P$ and $Q$ as in \eqref{eq:formula-kullback-leibler} and we use Theorem \ref{thm:quantum-reduces-to-classical}.
\end{proof}
\begin{rmk}
Equation \eqref{eq:vn-entropy} is same as a modified version of Umegaki's definition of the relative entropy \begin{equation}
    \label{eq:araki-eq-umegaki}D(\rho||\sigma)=\tr \rho^{1/2} (\log \rho) \rho^{1/2} - \tr \rho^{1/2} (\log \sigma)\rho^{1/2},
\end{equation} whenever $\rho^{1/2}(-\log \sigma) \rho^{1/2}$ is a densely defined operator.
\end{rmk}
\section{More Properties of Petz-Rényi Relative Entropy}\label{sec:properties}
 In this section, we use the properties of classical  divergences  to prove the properties of their quantum counterparts. Some results in this section are already known with different proofs. We provide appropriate references whenever we reprove a known result. Nevertheless, our idea is to show the usefulness of Theorem \ref{thm:quantum-reduces-to-classical} by showing that a number of results about quantum entropies follow trivially from corresponding classical results.
 \subsection{Limiting Cases}
 The definition of Petz-R\'enyi $\alpha$-relative entropy excludes the values $0$, $1$ and $\infty$ of $\alpha$. Nevertheless, we can give meaning to the  entropic quantities corresponding to these values of $\alpha$ and they are important in applications too \cite{Datta2009, muller-tomamichel-etal-2013, Datta_Leditzky_2014}. 
  First we prove that the  Petz-R\'enyi $\alpha$-relative entropy is nondecreasing in $\alpha$, which will help us to extend the definition of $D_{\alpha}$ to the values $0, 1$ and $\infty$. The next Theorem is available in \cite[part 4 of Proposition 5.3]{Hiai-2018}.
\begin{thm} \label{thm:3-veh-quantum}
For $\alpha\in (0,1)\cup(1,\infty)$ the  Petz-R\'enyi entropy, $D_{\alpha}(\rho||\sigma)$ is nondecreasing in $\alpha$.
\end{thm}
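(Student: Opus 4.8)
The plan is to exploit the reduction to the classical setting supplied by Theorem \ref{thm:quantum-reduces-to-classical}, which is precisely the transfer principle the paper advertises. Given $\rho$ and $\sigma$ with spectral decompositions as in \eqref{eq:spectral-rho-sigma}, I would form the probability measures $P$ and $Q$ on $\mathcal{I}\times\mathcal{I}$ defined in \eqref{eq:P-and-Q}. Since these data do not depend on $\alpha$, Theorem \ref{thm:quantum-reduces-to-classical} gives the single identity
\[
D_\alpha(\rho||\sigma) = D_\alpha(P||Q), \qquad \forall\,\alpha\in(0,1)\cup(1,\infty),
\]
for one fixed pair $(P,Q)$. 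Consequently the whole claim collapses to showing that $\alpha\mapsto D_\alpha(P||Q)$ is nondecreasing, i.e.\ to the monotonicity of the classical R\'enyi divergence in its order parameter.

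Next I would invoke that classical monotonicity directly. It is one of the standard properties of the R\'enyi divergence of two probability measures recorded in Appendix \ref{sec:classical-renyi-devergence}, and since $(P,Q)$ constructed above is an ordinary pair of probability distributions on the countable set $\mathcal{I}\times\mathcal{I}$, the classical statement applies verbatim. Feeding its conclusion back through the displayed identity yields that $D_\alpha(\rho||\sigma)$ is nondecreasing in $\alpha$, which is the assertion.

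There is essentially no hard analytic obstacle here beyond the classical input; the only point requiring care is bookkeeping in the extended reals. If one instead wanted to reprove the classical monotonicity from scratch, the route is to write $(\alpha-1)D_\alpha(P||Q)=\log\sum_{i,j} p_{ij}^{\alpha} q_{ij}^{1-\alpha}$ and control the sign of its derivative via a Jensen (equivalently H\"older) inequality applied to the tilted measure proportional to $p_{ij}^{\alpha}q_{ij}^{1-\alpha}$. The delicate cases are those where terms vanish or the series diverges, so that some $D_\alpha(\rho||\sigma)$ equal $+\infty$; monotonicity must then be read correctly in $[0,\infty]$. All of this degenerate behaviour is already absorbed into the classical statement cited from the appendix, so the clean argument is simply: reduce by Theorem \ref{thm:quantum-reduces-to-classical}, then quote the classical result.
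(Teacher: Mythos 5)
Your proposal is correct and is exactly the paper's argument: reduce to the classical pair $(P,Q)$ via Theorem \ref{thm:quantum-reduces-to-classical} and then quote the classical monotonicity of the R\'enyi divergence (Theorem \ref{thm:3-veh}). The extra sketch of how to reprove the classical fact is not needed, but it does not detract from the argument.
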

\begin{proof}
This is an easy consequence of Theorem \ref{thm:quantum-reduces-to-classical} and Theorem \ref{thm:3-veh}.
\end{proof}

 Theorem \ref{thm:3-veh-quantum} enables us to extend the definition of $D_{\alpha}(\rho||\sigma)$ to the values $\alpha = 0,1$ and $\infty$ as in the following definition.
\begin{defn}\label{defn:extended-orders-quantum}
 The Petz-R\'enyi relative entropies of orders $0, 1$ and $\infty$ are defined as \begin{align*}
     D_0(\rho||\sigma) &= \lim\limits_{\alpha \downarrow 0} D_{\alpha}(\rho||\sigma),\\
     D_1(\rho||\sigma) &= \lim\limits_{\alpha \uparrow 1} D_{\alpha}(\rho||\sigma),\\
     D_{\infty}(\rho||\sigma) &= \lim\limits_{\alpha \uparrow \infty} D_{\alpha}(\rho||\sigma).
 \end{align*}
\end{defn}
With the definition above, we have the following corollary.
\begin{cor}
For $\alpha\in [0,\infty]$, the function $\alpha\mapsto D_{\alpha}(\rho||\sigma)$ is nondecreasing and thus \[D_0(\rho||\sigma)\leq D_{\alpha}(\rho||\sigma)\leq D_{\infty}(\rho||\sigma), \quad \forall \alpha\geq0.\]
\end{cor}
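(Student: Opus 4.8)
The plan is to upgrade the monotonicity of Theorem \ref{thm:3-veh-quantum}, which is asserted only on the open set $(0,1)\cup(1,\infty)$, to the closed set $[0,\infty]$ by passing to the one-sided limits that define $D_0$, $D_1$ and $D_\infty$ in Definition \ref{defn:extended-orders-quantum}. The crucial observation is that a nondecreasing function possesses one-sided limits at every point of the extended reals $[0,\infty]$ (where all the quantities live, since $D_\alpha(\rho||\sigma)\geq 0$: this is the remark for $0<\alpha<1$, and for $\alpha>1$ it follows from Theorem \ref{thm:3-veh-quantum} by comparison with any smaller order in $(0,1)$), and that these limits can be rewritten as infima and suprema. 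Concretely, because $\alpha\mapsto D_\alpha(\rho||\sigma)$ is nondecreasing on $(0,1)\cup(1,\infty)$, the limits of Definition \ref{defn:extended-orders-quantum} satisfy
\[
D_0(\rho||\sigma)=\inf_{\alpha\in(0,1)}D_\alpha(\rho||\sigma),\quad D_1(\rho||\sigma)=\sup_{\alpha\in(0,1)}D_\alpha(\rho||\sigma),\quad D_\infty(\rho||\sigma)=\sup_{\alpha\in(1,\infty)}D_\alpha(\rho||\sigma).
\]

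First I would dispose of the generic case: if $\alpha,\beta\in(0,1)\cup(1,\infty)$ with $\alpha\leq\beta$, the inequality $D_\alpha(\rho||\sigma)\leq D_\beta(\rho||\sigma)$ is exactly Theorem \ref{thm:3-veh-quantum}. It then remains to treat comparisons in which at least one argument is an endpoint $0$, $1$ or $\infty$, and here I would simply pass the inequalities of Theorem \ref{thm:3-veh-quantum} to the infimum/supremum representations above. For any $\beta$ in the domain and any $\alpha\in(0,1)$ with $\alpha<\beta$ one has $D_\alpha(\rho||\sigma)\leq D_\beta(\rho||\sigma)$; taking the infimum over such $\alpha$ gives $D_0(\rho||\sigma)\leq D_\beta(\rho||\sigma)$. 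Dually, $D_\beta(\rho||\sigma)\leq\sup_{\alpha\in(1,\infty)}D_\alpha(\rho||\sigma)=D_\infty(\rho||\sigma)$ for every $\beta$ in the domain, and the same reasoning compares $D_0$, $D_1$ and $D_\infty$ with one another.

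The one point that deserves a moment of care is the endpoint $\alpha=1$, which sits between the two intervals rather than at an extreme of the domain, so that $D_1(\rho||\sigma)$ must be shown to dominate all smaller orders and to be dominated by all larger ones simultaneously. For $\beta<1$ this is immediate from $D_\beta(\rho||\sigma)\leq\sup_{\alpha\in(0,1)}D_\alpha(\rho||\sigma)=D_1(\rho||\sigma)$; for $\beta>1$ I would instead use that every $D_\alpha(\rho||\sigma)$ with $\alpha\in(0,1)$ is bounded above by $D_\beta(\rho||\sigma)$ (again by Theorem \ref{thm:3-veh-quantum}, since $\alpha<1<\beta$), and therefore so is their supremum $D_1(\rho||\sigma)$. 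Chaining these endpoint inequalities with the generic case covers all pairs $\alpha\leq\beta$ in $[0,\infty]$, yielding both the monotonicity and the displayed chain $D_0(\rho||\sigma)\leq D_\alpha(\rho||\sigma)\leq D_\infty(\rho||\sigma)$. I do not expect any genuine obstacle: the entire argument is the standard fact that a monotone function extended to the closure of its domain by one-sided limits stays monotone, and the only bookkeeping is ensuring the middle endpoint $\alpha=1$ is handled on both sides.
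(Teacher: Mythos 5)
Your argument is correct and is exactly the route the paper intends: the corollary is stated without a written proof as an immediate consequence of Theorem \ref{thm:3-veh-quantum} together with Definition \ref{defn:extended-orders-quantum}, i.e.\ the standard fact that a nondecreasing function extended to the closure of its domain by one-sided limits (equivalently, by the infimum/supremum representations you write down) remains nondecreasing. Your extra care at the interior endpoint $\alpha=1$ is the only nontrivial bookkeeping, and you handle it correctly on both sides.
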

\begin{rmk} Following the notations used in \cite{Datta2009},
the quantity $D_0(\rho||\sigma)$  may also be written as $D_{\operatorname{min}}(\rho||\sigma)$.
\end{rmk}
 Our next result in this section shows that the Umegaki relative entropy is the limit at $1$ of Petz-R\'enyi relative entropy. It is stated or proved  in the references \cite{Hiai-2018, Berta-Scholz-Tomamichel-2018, Jaksic-Ogata-Pillet-2012, petz-ohya-1993}. We show that this result is an easy consequence of the corresponding classical fact and our Theorem \ref{thm:quantum-reduces-to-classical}.
 \begin{thm}\label{thm:5-veh-quantum}
The Umegaki relative entropy is the limit of the Petz-R\'enyi relative entropy, i.e.,
\begin{equation}\label{eq:von-neumann-equals-left-limit-1}
    D(\rho||\sigma)=\lim\limits_{\alpha \uparrow 1} D_{\alpha}(\rho||\sigma)= D_1(\rho||\sigma).
\end{equation}
Moreover, if $D(\rho||\sigma) = \infty$ or there exists $\beta>1$ such that $D_\beta (\rho||\sigma)<\infty$, then also \begin{equation}\label{eq:von-neumann-equals-right-limit-1}
    \lim\limits_{\alpha \downarrow 1} D_{\alpha}(\rho||\sigma) = D(\rho||\sigma).
\end{equation}
\end{thm}
\begin{proof}
Recall from Theorem \ref{thm:5-veh} that the R\'enyi divergence and Kullback-Leibler divergence satisfy the \eqref{eq:von-neumann-equals-left-limit-1} and \eqref{eq:von-neumann-equals-right-limit-1} with $\rho$ and $\sigma$ replaced with $P$ and $Q$ respectively, where $P$ and $Q$ are the associated Nussbaum-Szko{\l}a distributions. Now the present theorem is an easy consequence Theorem \ref{thm:quantum-reduces-to-classical}.
\end{proof}
\begin{rmk}
It is possible that $D_{\alpha}(\rho||\sigma) = \infty$ for all $\alpha>1$, but $D(\rho||\sigma)< \infty$, and hence (\ref{eq:von-neumann-equals-right-limit-1}) does not hold (See Example \ref{eg:infty-after-one} below).
\end{rmk}
Now we discuss the limits at $0$ and $\infty$.
 If $\rho = \sum_ir_i\ketbra{u_i}{u_i}$ is a spectral decomposition of $\rho$, where $\{u_i\}$ is an orthonormal basis of $\CK$, then
\begin{equation}\label{eq:support-formula}
 \supp\rho = \overline{\spn}\{u_i|r_i\neq 0 \}.   
\end{equation} 
In \cite{Datta2009}, Datta observes in the finite dimensional setting that,  \begin{align}\label{eq:datta-leditzky}
    D_0(\rho||\sigma) = -\log \tr \Pi_{\rho}\sigma,
\end{align}
where $\Pi_{\rho}$ is the projection onto the support of $\rho$, i.e., by keeping the notations as in \eqref{eq:spectral-rho-sigma} \[\Pi_{\rho} = \textnormal{Projection onto the }\spn\{u_i\,|\,r_i\neq 0\}.\] In the finite dimensions, this result  follows from our  formula (\ref{eq:alpha-relative-entropy}) as well, because  we have \begin{align}\label{eq:datta-leditzky-computation}
    \begin{split}
       D_0(\rho||\sigma)&= \lim_{\alpha\rightarrow0}\frac{1}{\alpha-1}\log \sum\limits_{i,j} r_i^{\alpha}s_j^{1-\alpha}\abs{\braket{u_i}{v_j}}^2\\
       &=\lim_{\alpha\rightarrow0}\frac{1}{\alpha-1}\log \sum\limits_{\{i,j\,:\, r_i\neq 0\}} r_i^{\alpha}s_j^{1-\alpha}\abs{\braket{u_i}{v_j}}^2\\
       & = -\log \sum\limits_{\{i,j|r_i\neq 0\}}s_j^{}\abs{\braket{u_i}{v_j}}^2\\
       &=-\log \tr \sum_{\{i|r_i\neq 0\}}\ketbra{u_i}{u_i}\sum_{j}s_j\ketbra{v_j}{v_j}\\
       &= -\log \tr \Pi_{\rho}\sigma.
    \end{split}
\end{align}
 Now we prove (\ref{eq:datta-leditzky}) in the infinite dimensional situation. A priori, the computation in (\ref{eq:datta-leditzky-computation}) cannot go through in  infinite dimensions because we need a limit theorem to pass the limit through the infinite sum. Nevertheless, Theorem \ref{thm:4-veh} helps us to prove the desired result and the following proof works in both  finite and infinite dimensional setting.
 \begin{thm}\label{thm:4-veh-quantum}
 The Petz-R\'enyi relative entropy satisfies, \begin{align}\label{eq:datta-leditzky-infinite}
    D_0(\rho||\sigma) = -\log \tr \Pi_{\rho}\sigma,
\end{align}
where $\Pi_{\rho}$ is the projection onto  $\supp \rho$.
 \end{thm}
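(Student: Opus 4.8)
The plan is to push the entire $\alpha \downarrow 0$ limit onto the classical side via Theorem \ref{thm:quantum-reduces-to-classical}, where the limit is controlled by the classical result Theorem \ref{thm:4-veh}; this sidesteps the illegitimate interchange of limit and infinite sum that blocks the naive computation \eqref{eq:datta-leditzky-computation} in infinite dimensions.

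First I would apply Theorem \ref{thm:quantum-reduces-to-classical} to obtain $D_\alpha(\rho||\sigma) = D_\alpha(P||Q)$ for every $\alpha \in (0,1)$, where $P$ and $Q$ are the probability measures on $\mathcal{I}\times\mathcal{I}$ from \eqref{eq:P-and-Q}. Since this identity holds for all $\alpha$ in a left-neighbourhood of $0$, and since the limit defining $D_0(\rho||\sigma)$ exists by the monotonicity in Theorem \ref{thm:3-veh-quantum} (Definition \ref{defn:extended-orders-quantum}), taking $\alpha \downarrow 0$ gives
\begin{align*}
D_0(\rho||\sigma) = \lim_{\alpha \downarrow 0} D_\alpha(\rho||\sigma) = \lim_{\alpha \downarrow 0} D_\alpha(P||Q).
\end{align*}
Theorem \ref{thm:4-veh} then evaluates the right-hand side as $-\log Q(\{(i,j) : p_{ij} > 0\})$, the negative logarithm of the $Q$-measure of the set where the density of $P$ is strictly positive.

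It remains to identify this $Q$-measure with $\tr \Pi_\rho \sigma$. Observing that $p_{ij} = r_i \abs{\braket{u_i}{v_j}}^2 > 0$ exactly when $r_i \neq 0$ and $\braket{u_i}{v_j} \neq 0$, and that the omitted terms with $\braket{u_i}{v_j} = 0$ contribute nothing, I would write
\begin{align*}
Q(\{(i,j) : p_{ij} > 0\}) = \sum_{\{(i,j)\,:\, r_i \neq 0,\ \braket{u_i}{v_j} \neq 0\}} s_j \abs{\braket{u_i}{v_j}}^2 = \sum_{\{i\,:\, r_i \neq 0\}} \sum_j s_j \abs{\braket{u_i}{v_j}}^2.
\end{align*}
Expanding $\tr \Pi_\rho \sigma$ in the orthonormal basis $\{v_j\}$, with $\Pi_\rho = \sum_{\{i : r_i \neq 0\}} \ketbra{u_i}{u_i}$ and $\sigma \ket{v_j} = s_j \ket{v_j}$, produces the identical double sum, which yields \eqref{eq:datta-leditzky-infinite}.

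No single step is hard; the main point is that the delicate limit interchange, which fails directly in infinite dimensions, is entirely absorbed into the classical Theorem \ref{thm:4-veh}, whose validity for $\sigma$-finite measures is what makes the argument work beyond the finite-dimensional case. The only genuinely quantum content is the basis computation identifying $Q(\{(i,j) : p_{ij}>0\})$ with $\tr\Pi_\rho\sigma$, which I expect to be routine.
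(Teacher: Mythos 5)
Your proposal is correct and follows essentially the same route as the paper: both pass to the classical side via Theorem \ref{thm:quantum-reduces-to-classical}, invoke Theorem \ref{thm:4-veh} to get $D_0(\rho||\sigma)=-\log Q(\{p_{ij}>0\})$, and then identify $Q(\{p_{ij}>0\})=\sum_{\{i:r_i\neq 0\}}\sum_j s_j\abs{\braket{u_i}{v_j}}^2=\tr\Pi_\rho\sigma$ by the same basis computation. Your write-up is, if anything, slightly more explicit than the paper's about why the terms with $\braket{u_i}{v_j}=0$ may be dropped and about where the limit interchange is absorbed.
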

 \begin{proof}
 Keeping the notations as in \eqref{eq:spectral-rho-sigma}, we have by Theorem \ref{thm:quantum-reduces-to-classical} and Theorem \ref{thm:4-veh},
 \begin{align*}
      D_0(\rho||\sigma) &= -\log (Q ({\{P(i,j)>0\}}))\\
      & =-\log \sum_{\{i,j|r_i>0, \braket{u_i}{v_j}\neq0\}}s_j\abs{\braket{u_i}{v_j}}^2\\
      & = -\log \sum\limits_{\{i,j|r_i\neq 0\}}s_j^{}\abs{\braket{u_i}{v_j}}^2\\
       &=-\log \tr \sum_{\{i|r_i\neq 0\}}\ketbra{u_i}{u_i}\sum_{j}s_j\ketbra{v_j}{v_j}\\
       &= -\log \tr \Pi_{\rho}\sigma.
 \end{align*}
 \end{proof}
 \begin{rmk}
 On a related note, it may be recalled that a sandwiched R\'enyi relative entropy, $\tilde{D}_{\alpha}$ was introduced independently by M\"{u}ller-Lennert et al. in \cite{muller-tomamichel-etal-2013} and Wilde et al. in \cite{wilde-winter-yang-2014}. Furthermore, Datta and Leditzky in their  \cite[Theorem 1]{Datta_Leditzky_2014} proved that \[\lim_{\alpha\rightarrow 0 }\tilde{D}_{\alpha} = D_0(\rho||\sigma),\]
 whenever $\supp \rho =\supp \sigma$.
 \end{rmk}
 \begin{thm}\label{thm:6-veh-quantum}
Let $\rho$ and $\sigma$ be as in (\ref{eq:spectral-rho-sigma}). Then 
\begin{equation}
 \label{eq:limit-infty-quantum}
 D_{\infty}(\rho||\sigma) = \log \underset{}{\sup}\left\{\frac{r_i}{s_j}\,:\,\braket{u_i}{v_j}\neq 0\right\},
\end{equation}
with the conventions that $0/0 = 0$ and $x/0 = \infty$ if $x>0$. 
 \end{thm}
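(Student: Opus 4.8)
The plan is to follow the same strategy as in the proof of Theorem \ref{thm:4-veh-quantum}, namely to transfer the entire computation to the classical side via Theorem \ref{thm:quantum-reduces-to-classical} and then invoke the known formula for the classical R\'enyi divergence of order $\infty$. Since Theorem \ref{thm:quantum-reduces-to-classical} gives $D_\alpha(\rho||\sigma)=D_\alpha(P||Q)$ for every finite $\alpha\in(1,\infty)$, where $P,Q$ are the distributions in (\ref{eq:P-and-Q}), letting $\alpha\uparrow\infty$ and using Definition \ref{defn:extended-orders-quantum} immediately yields $D_\infty(\rho||\sigma)=D_\infty(P||Q)$. This is the point of the method: the limit in $\alpha$ no longer needs any quantum input, and it suffices to evaluate the classical order-$\infty$ divergence of $P$ and $Q$.

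Next I would appeal to the classical formula for the limiting R\'enyi divergence at $\infty$ (the analogue, for $\alpha\uparrow\infty$, of the order-$0$ result Theorem \ref{thm:4-veh}; I will cite it as Theorem \ref{thm:6-veh}), which states that $D_\infty(P||Q)=\log\operatorname{ess\,sup}(dP/dQ)$. Because $P$ and $Q$ are discrete measures on the countable set $\mathcal{I}\times\mathcal{I}$, on which the counting measure assigns positive mass to every point, the essential supremum is an ordinary supremum, and
\[
D_\infty(P||Q)=\log\sup_{\{(i,j)\,:\,p_{ij}>0\}}\frac{p_{ij}}{q_{ij}},
\]
with the standard reading that the ratio is $+\infty$ whenever $q_{ij}=0<p_{ij}$, i.e.\ exactly when $P\nll Q$.

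The computational heart is then the cancellation $p_{ij}/q_{ij}=r_i/s_j$ valid on the set where $\braket{u_i}{v_j}\neq0$, obtained directly from (\ref{eq:P-and-Q}) by dividing out the common factor $\abs{\braket{u_i}{v_j}}^2$. What remains is to reconcile the index set of the supremum in (\ref{eq:limit-infty-quantum}), namely $\{(i,j):\braket{u_i}{v_j}\neq0\}$, with the classical index set $\{(i,j):p_{ij}>0\}=\{(i,j):r_i>0,\ \braket{u_i}{v_j}\neq0\}$. The extra pairs in the former (those with $r_i=0$ but $\braket{u_i}{v_j}\neq0$) contribute ratio $r_i/s_j=0$ under the convention $0/0=0$, so they cannot increase the supremum; and since $\sum_ir_i=1$ forces some $r_{i_0}>0$ while $\{v_j\}_j$ being an orthonormal basis forces $\braket{u_{i_0}}{v_{j_0}}\neq0$ for some $j_0$, the supremum is strictly positive and hence unchanged by adjoining these zero terms. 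Therefore the two supremums coincide and (\ref{eq:limit-infty-quantum}) follows.

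The main obstacle I anticipate is purely bookkeeping rather than analytic: one must check that the conventions $0/0=0$ and $x/0=\infty$ in the statement are precisely those needed to make the supremum over $\{\braket{u_i}{v_j}\neq0\}$ agree with the classical max-divergence over $\{p_{ij}>0\}$, and in particular that they correctly encode the divergent case $\supp\rho\nsubseteq\supp\sigma$ (equivalently $P\nll Q$ by Proposition \ref{prop:support-condition-iff-absolute-continuity}), where both sides equal $+\infty$. Beyond this matching of index sets and conventions, no further difficulty arises, since all the work with limits has already been discharged on the classical side.
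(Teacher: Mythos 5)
Your proposal is correct and follows essentially the same route as the paper: both reduce to $D_\infty(P||Q)$ via Theorem \ref{thm:quantum-reduces-to-classical} and Definition \ref{defn:extended-orders-quantum}, invoke the classical order-$\infty$ formula (equation (\ref{eq:limit-infty-countable-sample-space}), the countable-sample-space form of Theorem \ref{thm:6-veh}), and cancel the common factor $\abs{\braket{u_i}{v_j}}^2$ to obtain $\sup\{r_i/s_j : \braket{u_i}{v_j}\neq 0\}$. Your explicit reconciliation of the index sets (discarding pairs with $r_i=0$ since they contribute ratio $0$ to a strictly positive supremum) is a slightly more careful spelling-out of what the paper handles implicitly through the conventions $0/0=0$ and $x/0=\infty$.
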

 \begin{proof} Let $P$ and $Q$ be as in (\ref{eq:P-and-Q}).
By the definition  of $D_{\infty}(\rho||\sigma)$ and Theorem \ref{thm:quantum-reduces-to-classical}, we have 
\begin{align*}
 D_{\infty}(\rho||\sigma)&= \lim_{\alpha\rightarrow \infty}D_{\alpha}(\rho||\sigma)= \lim_{\alpha\rightarrow \infty}D_{\alpha}(P||Q) = D_{\infty}(P||Q).
\end{align*}
By equation (\ref{eq:limit-infty-countable-sample-space}), \begin{align*}
  D_{\infty}(P||Q) &= \log \underset{(i,j)\in \mathcal{I}\times \mathcal{I}}{\sup}\,\frac{r_i \abs{\braket{u_i}{v_j}^2}}{s_j\abs{\braket{u_i}{v_j}^2}}\\
  &=\log \underset{}{\sup}\left\{\frac{r_i}{s_j}\,:\,\braket{u_i}{v_j}\neq 0\right\}
\end{align*}
with the conventions that $0/0 = 0$ and $x/0 = \infty$ if $x>0$.
 \end{proof}
 
\subsection{Continuity, Positivity, Symmetry and Convexity} 
We begin this section with three examples which illustrate the behaviour of $D_{\alpha}(\rho||\sigma)$ when $\alpha\geq 1.$ These examples will help us to understand continuity points of $D_{\alpha}(\rho||\alpha)$.
\begin{eg}[$D_1(\rho||\sigma)<\infty$ but $D_{\alpha}(\rho||\sigma) = \infty, \forall \alpha >1$]\label{eg:infty-after-one} Let $\{u_i\}_{i=1}^{\infty}$ be any orthonormal basis on $\CK $. Take \begin{align*}
    \rho &= \sum_{i=1}^{\infty}2^{-i}\ketbra{u_i},\\
    \sigma & =s^{-1} \sum_{j=1}^{\infty}2^{-j^2}\ketbra{u_j},
\end{align*}
 where $s=\left(\sum_j 2^{-j^2}\right)^{1/2}$. In this case, keeping the notations in (\ref{eq:spectral-rho-sigma}), and  (\ref{eq:P-and-Q}), $r_i=2^{-i}$, $s_j=s^{-1}2^{-j^2}$, $\braket{u_i}{v_j}= \delta_{i,j}$, we have
 \begin{align*}
     D_1(\rho||\sigma) = D(P||Q)&= \sum_{i}r_i\log \left( \frac{r_i}{s_i}\right) = \sum_{i}2^{-i}\log \left( \frac{s2^{-i}}{2^{-i^2}}\right)\\
     &= \sum_{i}2^{-i} \log \left(s2^{-i+i^2}\right)= \sum_{i}2^{-i} \log s+ \sum_{i}2^{-i} \log \left(2^{-i+i^2}\right) \\
     &= \sum_{i}2^{-i} \log s + \sum_{i}2^{-i}(-i+i^2) <\infty.
 \end{align*}
 On the other hand, for $\alpha>1$, \begin{align*}
     \sum_{i}r_i^{\alpha}s_i^{(1-\alpha)} &=s^{-1(1-\alpha)} \sum_{i}2^{-\alpha i}2^{-(1-\alpha)i^2} = s^{-1(1-\alpha)}\sum_{i}2^{(\alpha-1)i^2-\alpha i}=\infty
 \end{align*}
 because $(\alpha-1)>0$. Therefore,
 \begin{align*}
     D_{\alpha}(\rho||\sigma) = D_{\alpha}(P||Q)&= \frac{1}{\alpha-1} \log \sum_{i}r_i^{\alpha}s_i^{1-\alpha} = \infty. 
 \end{align*}
\end{eg}
\begin{eg}[$D_{\alpha}(\rho||\sigma)<\infty$ for $1<\alpha< 2$, but $D_{2}(\rho||\sigma) = \infty$] Let $\{u_i\}_{i=1}^{\infty}$ be any orthonormal basis on $\CK $. Take \begin{align*}
    \rho &= \sum_{i=1}^{\infty}2^{-i}\ketbra{u_i},\\
    \sigma & =s^{-1} \sum_{j=1}^{\infty}2^{-2j}\ketbra{u_j},
\end{align*}
 where $s=\left(\sum_j 2^{-2j}\right)^{1/2}$. In this case, keeping the notations in (\ref{eq:spectral-rho-sigma}), and  (\ref{eq:P-and-Q}), $r_i=2^{-i}$, $s_j=s^{-1}2^{-2j}$, $\braket{u_i}{v_j}= \delta_{i,j}$. We have
for $\alpha>1$, \begin{align*}
     \sum_{i}r_i^{\alpha}s_i^{(1-\alpha)} &=s^{-1(1-\alpha)} \sum_{i}2^{-\alpha i}2^{-(1-\alpha)2i} = s^{-1(1-\alpha)}\sum_{i}2^{(\alpha-1)2i-\alpha i}= s^{-1(1-\alpha)}\sum_{i}2^{(\alpha-2)i}.
 \end{align*}
 The above series converges for $1<\alpha<2$ and diverges for $\alpha=2$. Therefore,
 \begin{align*}
     D_{\alpha}(\rho||\sigma) = D_{\alpha}(P||Q)&= \frac{1}{\alpha-1} \log \sum_{i}r_i^{\alpha}s_i^{1-\alpha}
 \end{align*}
is finite for $1<\alpha<2$ and diverges for $\alpha=2$. 
\end{eg}
\begin{eg}[$D_{2}(\rho||\sigma)<\infty$,  but $D_{\alpha}(\rho||\sigma) = \infty$ for $\alpha>2$] Let $\{u_i\}_{i=1}^{\infty}$ be any orthonormal basis on $\CK $. Take \begin{align*}
    \rho &= \sum_{i=1}^{\infty}2^{-i}\ketbra{u_i},\\
    \sigma & =s^{-1} \sum_{j=1}^{\infty}j^{2}2^{-2j}\ketbra{u_j},
\end{align*}
 where $s=\left(\sum_j j^{2}2^{-2j}\right)^{1/2}$. In this case, keeping the notations in (\ref{eq:spectral-rho-sigma}), and  (\ref{eq:P-and-Q}), $r_i=2^{-i}$, $s_j=s^{-1}j^{2}2^{-2j}$, $\braket{u_i}{v_j}= \delta_{i,j}$. We have
for $\alpha\geq 2$, \begin{align*}
     \sum_{i}r_i^{\alpha}s_i^{(1-\alpha)} &=s^{-1(1-\alpha)} \sum_{i}2^{-\alpha i}i^{2(1-\alpha)}2^{-(1-\alpha)2i}\\ 
     &= s^{-1(1-\alpha)}\sum_{i}i^{2(1-\alpha)}2^{(\alpha-1)2i-\alpha i}\\
     &= s^{-1(1-\alpha)}\sum_{i}i^{2(1-\alpha)}2^{(\alpha-2)i}.
 \end{align*}
 The above series converges for $\alpha=2$ and diverges for $\alpha>2$. Therefore,
 \begin{align*}
     D_{\alpha}(\rho||\sigma) = D_{\alpha}(P||Q)&= \frac{1}{\alpha-1} \log \sum_{i}r_i^{\alpha}s_i^{1-\alpha}
 \end{align*}
is finite for $\alpha=2$ and diverges for $\alpha>2$. 
\end{eg}
The following careful characterization of the continuity points in $\alpha$ of $D_{\alpha}(\rho||\sigma)$ in the infinite dimensions does not seem to be available in the literature.
\begin{thm}
 \label{thm:7-veh-quantum}
The Petz-R\'enyi relative entropy $D_\alpha (\rho||\sigma)$ is continuous in $\alpha$ on the set $\mathcal{A} = [0,1]\cup \{\alpha>1|\,  D_{\alpha}(\rho||\sigma)<\infty\}$.
\end{thm}
\begin{proof} By Theorem \ref{thm:quantum-reduces-to-classical} we know $D_\alpha (\rho||\sigma) = D_\alpha (P||Q)$ where $\rho$ and $\sigma$ are as in (\ref{eq:spectral-rho-sigma})  and $P$ and $Q$ are as in (\ref{eq:P-and-Q}). The result follows because the same result is true for the classical R\'enyi relative divergence (refer Theorem \ref{thm:5-veh}).
\end{proof}
The following theorem is available in
\cite[10 of Proposition 5.3]{Hiai-2018} but our proof is different.
\begin{thm}
\label{thm:8-veh-quantum} For any order $\alpha\in [0,\infty]$, 
\[D_{\alpha}(\rho||\sigma)\geq 0.\] For $\alpha>0$, $D_{\alpha}(\rho||\sigma) = 0$ if and only if $\rho=\sigma$. For $\alpha = 0$, $D_{\alpha}(\rho||\sigma) = 0$ if and only if $\supp \sigma \subseteq \supp \rho$.
\end{thm}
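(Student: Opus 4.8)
The plan is to push every assertion through the classical reduction. By Theorem \ref{thm:quantum-reduces-to-classical}, for $\alpha\in(0,1)\cup(1,\infty)$ we have $D_\alpha(\rho||\sigma)=D_\alpha(P||Q)$ with $P,Q$ as in (\ref{eq:P-and-Q}), while the three remaining orders $\alpha\in\{0,1,\infty\}$ are defined as one-sided limits (Definition \ref{defn:extended-orders-quantum}). So I would reduce each claim to the corresponding classical fact about the R\'enyi divergence (the appendix statement, cited as Theorem \ref{thm:8-veh}) and then translate the resulting classical condition into a statement about $\rho$ and $\sigma$ using Propositions \ref{prop:rho-equals-sigma-P-equals-Q} and \ref{prop:support-condition-iff-absolute-continuity}.

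For the nonnegativity $D_\alpha(\rho||\sigma)\ge 0$ I would first invoke nonnegativity of the classical R\'enyi divergence, which together with the identity gives $D_\alpha(\rho||\sigma)\ge 0$ on $(0,1)\cup(1,\infty)$; the three endpoint values are then limits of nonnegative quantities and hence nonnegative as well.

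For the characterization at $\alpha>0$, the easy direction is: if $\rho=\sigma$ then $P=Q$ by Proposition \ref{prop:rho-equals-sigma-P-equals-Q}, so $D_\alpha(P||Q)=0$ and therefore $D_\alpha(\rho||\sigma)=0$ for every order, including the limiting ones. For the converse, suppose $D_\alpha(\rho||\sigma)=0$ for some $\alpha>0$. Rather than re-deriving the classical zero-characterization at the awkward endpoints $\alpha=1$ and $\alpha=\infty$, I would exploit monotonicity: since $\beta\mapsto D_\beta(\rho||\sigma)$ is nondecreasing (Theorem \ref{thm:3-veh-quantum} and the corollary following it) and nonnegative, $D_\alpha=0$ forces $D_\beta=0$ for every $0<\beta\le\alpha$, in particular for some $\beta\in(0,1)$. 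For that $\beta$ the reduction applies directly, $D_\beta(P||Q)=0$, so the classical result yields $P=Q$, whence $\rho=\sigma$ by Proposition \ref{prop:rho-equals-sigma-P-equals-Q}. This squeezing step is the key device that lets the clean classical statement at an interior order cover the limiting orders.

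For $\alpha=0$ the cleanest route is Theorem \ref{thm:4-veh-quantum}, which gives $D_0(\rho||\sigma)=-\log\tr\Pi_\rho\sigma$. Writing $\sigma=\sum_j s_j\ketbra{v_j}$ one has $\tr\Pi_\rho\sigma=\sum_j s_j\norm{\Pi_\rho v_j}^2\le\sum_j s_j=1$, with equality precisely when $v_j\in\supp\rho$ for every $j$ with $s_j\neq 0$, i.e. when $\supp\sigma\subseteq\supp\rho$; hence $D_0(\rho||\sigma)=0$ iff $\tr\Pi_\rho\sigma=1$ iff $\supp\sigma\subseteq\supp\rho$. (Alternatively one could use the classical equivalence $D_0(P||Q)=0\Leftrightarrow Q\ll P$ together with Proposition \ref{prop:support-condition-iff-absolute-continuity} applied with the roles of $\rho$ and $\sigma$ interchanged.) The only real bookkeeping obstacle lies at the three limiting orders $\alpha\in\{0,1,\infty\}$, where $D_\alpha$ is merely a limit and the interior identity does not apply verbatim; the monotonicity squeeze disposes of all $\alpha>0$, and the explicit formula of Theorem \ref{thm:4-veh-quantum} disposes of $\alpha=0$.
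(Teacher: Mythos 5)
Your proposal is correct and follows essentially the same route as the paper, which simply cites the classical result (Theorem \ref{thm:8-veh}), Proposition \ref{prop:rho-equals-sigma-P-equals-Q}, and the reduction Theorem \ref{thm:quantum-reduces-to-classical}. Your extra care at the limiting orders (the monotonicity squeeze for $\alpha\in\{1,\infty\}$ and the explicit formula of Theorem \ref{thm:4-veh-quantum} at $\alpha=0$) fills in details the paper leaves implicit, but it is the same argument.
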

\begin{proof}
Follows easily from Theorem \ref{thm:8-veh} and Proposition \ref{prop:rho-equals-sigma-P-equals-Q} because of Theorem \ref{thm:quantum-reduces-to-classical}.
\end{proof}
The following Proposition is available in \cite[5 Proposition 5.3]{Hiai-2018}.
\begin{prop}
\label{prop:8-veh-quantum}
For any $0<\alpha<1$, the Petz-R\'enyi relative entropy shows the following skew-symmetry property \[D_\alpha(\rho||\sigma) = \frac{\alpha}{1-\alpha}D_{1-\alpha}(\sigma||\rho).\]
\end{prop}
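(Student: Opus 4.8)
The plan is to deduce the skew-symmetry from the reduction to the classical R\'enyi divergence established in Theorem \ref{thm:quantum-reduces-to-classical}, exactly in the spirit of the other results in this section. Throughout, fix $0<\alpha<1$, so that $1-\alpha\in(0,1)$ as well; in this range both $\rho^{\alpha},\sigma^{1-\alpha}$ and $\sigma^{1-\alpha},\rho^{\alpha}$ are bounded, so Condition \ref{item:1-general} of Theorem \ref{thm:the-formula-general} is automatic and the formula \eqref{eq:alpha-relative-entropy} applies verbatim to both $D_\alpha(\rho||\sigma)$ and $D_{1-\alpha}(\sigma||\rho)$ as equalities in $[0,\infty]$.

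First I would write down the two classical pairs. For the pair $(\rho,\sigma)$ with spectral data \eqref{eq:spectral-rho-sigma}, Theorem \ref{thm:quantum-reduces-to-classical} gives $D_\alpha(\rho||\sigma)=D_\alpha(P||Q)$ with $P,Q$ as in \eqref{eq:P-and-Q}. Applying the same theorem to the pair $(\sigma,\rho)$ — so that $\sigma=\sum_j s_j\ketbra{v_j}$ plays the role of the first state and $\rho=\sum_i r_i\ketbra{u_i}$ that of the second — produces measures $P',Q'$ on $\mathcal{I}\times\mathcal{I}$ given by $P'(j,i)=s_j\abs{\braket{v_j}{u_i}}^2$ and $Q'(j,i)=r_i\abs{\braket{v_j}{u_i}}^2$, and $D_{1-\alpha}(\sigma||\rho)=D_{1-\alpha}(P'||Q')$. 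Since $\abs{\braket{v_j}{u_i}}^2=\abs{\braket{u_i}{v_j}}^2$ and the relabelling $(i,j)\mapsto(j,i)$ is a bijection of $\mathcal{I}\times\mathcal{I}$, the pair $(P',Q')$ is exactly $(Q,P)$ up to this reindexing; hence $D_{1-\alpha}(\sigma||\rho)=D_{1-\alpha}(Q||P)$.

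It then remains to prove the purely classical skew-symmetry $D_\alpha(P||Q)=\frac{\alpha}{1-\alpha}D_{1-\alpha}(Q||P)$, which is immediate from the defining formula. Writing $S=\sum_{i,j}r_i^\alpha s_j^{1-\alpha}\abs{\braket{u_i}{v_j}}^2$, Theorem \ref{thm:the-formula-general} gives $D_\alpha(\rho||\sigma)=\frac{1}{\alpha-1}\log S$, while the corresponding formula for $D_{1-\alpha}(\sigma||\rho)$ involves $\sum_{i,j}s_j^{1-\alpha}r_i^{\alpha}\abs{\braket{u_i}{v_j}}^2$, which is the \emph{same} quantity $S$. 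Thus $D_{1-\alpha}(\sigma||\rho)=\frac{1}{-\alpha}\log S$, and multiplying by $\frac{\alpha}{1-\alpha}$ yields $\frac{-1}{1-\alpha}\log S=\frac{1}{\alpha-1}\log S=D_\alpha(\rho||\sigma)$, as claimed. If a classical skew-symmetry statement is recorded in the appendix on R\'enyi divergence, one may instead quote it directly at this step rather than recompute.

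I do not anticipate a genuine obstacle. The only points requiring care are bookkeeping: matching the two classical pairs under the exchange $\rho\leftrightarrow\sigma$ via the reindexing $(i,j)\mapsto(j,i)$, and confirming that for $0<\alpha<1$ the well-definedness Condition \ref{item:1-general} holds for both orders so that both sides are given by \eqref{eq:alpha-relative-entropy} and the identity is valid in $[0,\infty]$, including the degenerate case $S=0$ where both sides are $+\infty$.
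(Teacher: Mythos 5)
Your proof is correct and takes essentially the same route as the paper, whose proof simply invokes the classical skew-symmetry (Proposition \ref{prop:8-veh}) through the reduction of Theorem \ref{thm:quantum-reduces-to-classical}; you merely make explicit the bookkeeping that exchanging $\rho$ and $\sigma$ turns the pair $(P,Q)$ into $(Q,P)$ up to the reindexing $(i,j)\mapsto(j,i)$. Your additional direct verification from formula \eqref{eq:alpha-relative-entropy} is a valid self-contained alternative but is not needed once the classical result is quoted.
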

\begin{proof}
Follows from Proposition \ref{prop:8-veh}.
\end{proof}
Note that in particular, Petz-R\'enyi relative entropy is symmetric for $\alpha=1/2$, and that skew-symmetry does not hold for $\alpha = 0$ and $\alpha = 1$. 
\begin{thm}
\label{thm:16-veh-quantum}
For any $0<\alpha\leq \beta <1$,\[\frac{\alpha}{\beta}\frac{1-\beta}{1-\alpha}D_{\beta}(\rho||\sigma)\leq D_\alpha (\rho||\sigma)\leq D_{\beta}(\rho||\sigma).\]
\end{thm}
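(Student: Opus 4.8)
The plan is to follow the same template used throughout this section: transport the statement to the classical setting via Theorem \ref{thm:quantum-reduces-to-classical} and then invoke the corresponding classical inequality for the R\'enyi divergence. Concretely, let $P$ and $Q$ be the probability measures attached to $\rho$ and $\sigma$ as in \eqref{eq:P-and-Q}. Since $D_{\alpha}(\rho||\sigma) = D_{\alpha}(P||Q)$ and $D_{\beta}(\rho||\sigma) = D_{\beta}(P||Q)$ by Theorem \ref{thm:quantum-reduces-to-classical}, the asserted two-sided estimate is \emph{identical} to the corresponding estimate for the classical R\'enyi divergence, namely Theorem \ref{thm:16-veh}. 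Moreover the right-hand inequality $D_{\alpha}(\rho||\sigma)\le D_{\beta}(\rho||\sigma)$ is nothing but the monotonicity already recorded in Theorem \ref{thm:3-veh-quantum}, so the only genuine content is the left-hand inequality.

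For the left-hand inequality I would isolate the single underlying convexity fact and reformulate it as a monotonicity statement. Writing $\phi(\gamma) = (\gamma-1)D_{\gamma}(P||Q) = \log\sum_{i,j} r_i^{\gamma}s_j^{1-\gamma}\abs{\braket{u_i}{v_j}}^2$, note that $\phi$ is convex in $\gamma$ (the standard convexity of $\gamma\mapsto\log\sum_x P(x)^{\gamma}Q(x)^{1-\gamma}$, via H\"older's inequality), and that $\phi(1)=0$ while $\phi(0)\le 0$. A short manipulation rewrites the left-hand inequality $\frac{\alpha}{\beta}\frac{1-\beta}{1-\alpha}D_{\beta}\le D_{\alpha}$ as $\frac{1-\alpha}{\alpha}D_{\alpha}\ge \frac{1-\beta}{\beta}D_{\beta}$, i.e. as the assertion that $\gamma\mapsto \frac{1-\gamma}{\gamma}D_{\gamma}(P||Q) = -\phi(\gamma)/\gamma$ is nonincreasing on $(0,1)$, equivalently that $\gamma\mapsto \phi(\gamma)/\gamma$ is nondecreasing. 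The supporting-line inequality for the convex function $\phi$ at the point $0$ gives $\gamma\phi'(\gamma)-\phi(\gamma)\ge -\phi(0)\ge 0$, which is exactly the sign of the numerator of $\frac{d}{d\gamma}\bigl(\phi(\gamma)/\gamma\bigr)$; this yields the required monotonicity and hence the left-hand bound.

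The main thing to watch is sign bookkeeping rather than any real analytic obstacle: throughout the range $0<\alpha\le\beta<1$ one has $\gamma-1<0$ and $\phi(\gamma)\le 0$ (because $\sum_{i,j} r_i^{\gamma}s_j^{1-\gamma}\abs{\braket{u_i}{v_j}}^2\le 1$), so every division by $\gamma-1$ or $1-\gamma$ flips the inequality and must be tracked carefully. A pleasant feature is that, unlike the regime $\alpha>1$, there are no convergence or $\infty$-valued cases to dispose of here, since $D_{\gamma}(\rho||\sigma)$ is automatically finite for every $\gamma\in(0,1)$; thus the reduction in the first paragraph applies verbatim in both finite and infinite dimensions, and the proof is complete once the classical Theorem \ref{thm:16-veh} is invoked.
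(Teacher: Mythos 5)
Your first paragraph is exactly the paper's proof: reduce to the classical setting via Theorem \ref{thm:quantum-reduces-to-classical} and invoke Theorem \ref{thm:16-veh}; the remaining paragraphs merely re-derive the cited classical inequality (correctly, via convexity of $\gamma\mapsto\log\sum_{i,j}r_i^{\gamma}s_j^{1-\gamma}\abs{\braket{u_i}{v_j}}^2$ together with $\phi(0)\le 0$), which the paper does not bother to do. One small inaccuracy: $D_{\gamma}(\rho||\sigma)$ is \emph{not} automatically finite for $\gamma\in(0,1)$ --- when $\supp\rho\perp\supp\sigma$ the sum vanishes and $D_{\gamma}=\infty$ for all $\gamma$ (Theorem \ref{thm:24-veh-quantum}) --- but in that case both sides of the claimed inequality are $+\infty$ and the statement holds trivially, so this does not affect the argument.
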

\begin{proof}
Follows from Theorems \ref{thm:16-veh} and \ref{thm:quantum-reduces-to-classical}.
\end{proof}
\begin{rmk}In the light of the Theorem~\ref{thm:16-veh-quantum} we can discuss about a topology on the set of states arising from $D_{\alpha}$.
For a fixed $\alpha \in (0,1)$, one can define $\alpha$-left open ball with center $\rho$ and radius $r>0$ to be the set
$\{ \sigma\,|\, D_\alpha(\rho||\sigma) <r\}$, and subsequently define $\alpha$-left open sets to be the union of $\alpha$-left open balls.
Notice that Theorem~\ref{thm:16-veh-quantum} yields that for $\alpha, \beta \in (0,1)$, the $\alpha$-left topology is equivalent to the 
$\beta$-left topology. Similarly, one can define $\alpha$-right open balls and $\alpha$-right topologies by reversing the order 
of $\rho$ and $\sigma$ in the definition of $\alpha$-left topology. Proposition~\ref{prop:8-veh-quantum}, combined with the fact that the 
$\alpha$-left topologies are all equivalent for $0<\alpha<1$, gives that the $\alpha$-left topologies are equivalent with the 
$\beta$-right topologies for all $\alpha, \beta \in (0,1)$.
\end{rmk}
\begin{thm}
\label{thm:23-veh-quantum}The following conditions are equivalent:
\begin{enumerate}
    \item \label{item:vEH-1-quantum}$\supp \sigma \subseteq \supp \rho,$
    \item \label{item:vEH-2-quantum} $\tr \Pi_\rho\sigma = 1$, where $\Pi_\rho$ is the orthogonal projection onto $\supp \rho$.
    \item\label{item:vEH-3-quantum} $D_0(\rho||\sigma) = 0$,
    \item\label{item:vEH-4-quantum} $\lim_{\alpha \downarrow 0} D_{\alpha}(\rho||\sigma) = 0.$
\end{enumerate}
\end{thm}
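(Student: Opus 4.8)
The plan is to funnel all four conditions through Theorem~\ref{thm:4-veh-quantum}, which already supplies the unconditional identity $D_0(\rho||\sigma) = -\log \tr \Pi_\rho \sigma$. First I would note that conditions~\ref{item:vEH-3-quantum} and~\ref{item:vEH-4-quantum} are literally the \emph{same} statement: Definition~\ref{defn:extended-orders-quantum} \emph{sets} $D_0(\rho||\sigma) := \lim_{\alpha\downarrow 0} D_\alpha(\rho||\sigma)$, so there is nothing to prove between them.

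Next I would deduce \ref{item:vEH-2-quantum} $\Leftrightarrow$ \ref{item:vEH-3-quantum} from Theorem~\ref{thm:4-veh-quantum}. Because $\Pi_\rho$ is a projection and $\tr\sigma = 1$, the number $\tr \Pi_\rho \sigma$ lies in $[0,1]$, so $-\log \tr \Pi_\rho \sigma = 0$ holds exactly when $\tr \Pi_\rho \sigma = 1$. Hence $D_0(\rho||\sigma)=0$ is equivalent to $\tr\Pi_\rho\sigma = 1$, with no further input needed.

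The only link requiring genuine work is \ref{item:vEH-1-quantum} $\Leftrightarrow$ \ref{item:vEH-2-quantum}, and this is the step I would treat most carefully in infinite dimensions. Writing $\sigma = \sum_j s_j\ketbra{v_j}$ as in~(\ref{eq:spectral-rho-sigma}), I compute
\[
\tr \Pi_\rho \sigma = \sum_j s_j \mel{v_j}{\Pi_\rho}{v_j} = \sum_j s_j \norm{\Pi_\rho v_j}^2,
\]
and since $\norm{\Pi_\rho v_j}^2 \le \norm{v_j}^2 = 1$ while $\sum_j s_j = 1$, the sum is at most $1$. Equality forces $\norm{\Pi_\rho v_j}^2 = 1$, i.e.\ $\Pi_\rho v_j = v_j$, i.e.\ $v_j \in \supp\rho$, for every $j$ with $s_j \ne 0$. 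By the support formula~(\ref{eq:support-formula}), $\supp\sigma = \overline{\spn}\{v_j : s_j\ne 0\}$, so the latter condition is precisely $\supp\sigma \subseteq \supp\rho$. This closes the equivalence cycle.

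The main (mild) obstacle is the equality case of the infinite series: one must argue \emph{term by term} that $s_j>0$ implies $v_j\in\supp\rho$, rather than merely that this holds for all but finitely many indices. The argument is clean here because every summand is nonnegative and dominated by $s_j$, so equality of the two sums forces equality in each term. An alternative route would bypass Theorem~\ref{thm:4-veh-quantum} and instead combine Theorem~\ref{thm:quantum-reduces-to-classical} with the classical analogue and a support-versus-absolute-continuity statement in the spirit of Proposition~\ref{prop:support-condition-iff-absolute-continuity}, but the direct argument above is shorter and entirely self-contained.
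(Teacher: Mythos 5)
Your proof is correct, and it diverges from the paper's in one meaningful place. The paper establishes the cycle by citing three of its earlier quantum theorems: \ref{item:vEH-1-quantum} $\Leftrightarrow$ \ref{item:vEH-3-quantum} from Theorem~\ref{thm:8-veh-quantum} (whose $\alpha=0$ case is itself obtained by pushing the classical Theorem~\ref{thm:8-veh} through the reduction Theorem~\ref{thm:quantum-reduces-to-classical} together with the support/absolute-continuity correspondence of Proposition~\ref{prop:support-condition-iff-absolute-continuity}), \ref{item:vEH-3-quantum} $\Leftrightarrow$ \ref{item:vEH-2-quantum} from Theorem~\ref{thm:4-veh-quantum}, and \ref{item:vEH-3-quantum} $\Leftrightarrow$ \ref{item:vEH-4-quantum} from the continuity Theorem~\ref{thm:7-veh-quantum}. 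You share the middle link, but you replace the link to condition~\ref{item:vEH-1-quantum} with a direct operator computation: $\tr\Pi_\rho\sigma=\sum_j s_j\norm{\Pi_\rho v_j}^2\le 1$, with the equality case analyzed term by term to force $v_j\in\supp\rho$ whenever $s_j>0$; this is self-contained, elementary, and avoids invoking the classical machinery for this step (your equality-case argument for the nonnegative series is exactly the point that needs care in infinite dimensions, and you handle it correctly). You also dispatch \ref{item:vEH-3-quantum} $\Leftrightarrow$ \ref{item:vEH-4-quantum} by observing that Definition~\ref{defn:extended-orders-quantum} makes the two conditions literally identical, which is more direct than the paper's appeal to continuity. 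What the paper's route buys is uniformity with its overall strategy of deriving every quantum statement from its classical counterpart; what yours buys is a shorter, independent verification that does not rest on the chain Theorem~\ref{thm:quantum-reduces-to-classical} $\to$ Theorem~\ref{thm:8-veh-quantum}.
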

\begin{proof}
\ref{item:vEH-1-quantum} $\Leftrightarrow$ \ref{item:vEH-3-quantum} follows from Theorem \ref{thm:8-veh-quantum}. 

\ref{item:vEH-3-quantum} $\Leftrightarrow$ \ref{item:vEH-2-quantum} follows from Theorem \ref{thm:4-veh-quantum}.

\ref{item:vEH-3-quantum} $\Leftrightarrow$ \ref{item:vEH-4-quantum} follows from Theorem \ref{thm:7-veh-quantum}.
\end{proof}
The following theorem, states that the Petz-R\'enyi $\alpha$-relative entropy of two states is infinity for some $\alpha\in [0,1)$ if and only if $\supp \rho\perp \supp \sigma$. This seems to be a very interesting consequence of the classical results on relative entropic quantities.
\begin{thm}
\label{thm:24-veh-quantum}The following conditions are equivalent:
\begin{enumerate}
    \item \label{item:vEH-5-quantum}$\supp \rho\perp \supp \sigma,$
    \item \label{item:vEH-6-quantum} $\tr \Pi_\rho\sigma = 0$, where $\Pi_\rho$ is the orthogonal projection onto $\supp \rho$,
    \item\label{item:vEH-7-quantum} $D_{\alpha}(\rho||\sigma) = \infty$ for some $\alpha\in [0,1)$,
    \item\label{item:vEH-8-quantum} $D_{\alpha}(\rho||\sigma) = \infty$ for all $\alpha\in [0,\infty].$
\end{enumerate}
\end{thm}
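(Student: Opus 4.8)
The plan is to prove the chain $\ref{item:vEH-5-quantum}\Leftrightarrow\ref{item:vEH-6-quantum}$, followed by $\ref{item:vEH-6-quantum}\Rightarrow\ref{item:vEH-8-quantum}\Rightarrow\ref{item:vEH-7-quantum}\Rightarrow\ref{item:vEH-5-quantum}$, so that all four conditions collapse to one. The backbone is the $\alpha=0$ evaluation $D_0(\rho||\sigma)=-\log\tr\Pi_\rho\sigma$ of Theorem \ref{thm:4-veh-quantum}, together with the monotonicity of $\alpha\mapsto D_\alpha(\rho||\sigma)$ on $[0,\infty]$ (the corollary following Definition \ref{defn:extended-orders-quantum}) and the explicit formula \eqref{eq:alpha-relative-entropy}.

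First I would settle $\ref{item:vEH-5-quantum}\Leftrightarrow\ref{item:vEH-6-quantum}$ by a direct computation. Writing $\Pi_\rho=\sum_{\{i\,:\,r_i\neq0\}}\ketbra{u_i}$ and using the spectral decomposition \eqref{eq:spectral-rho-sigma} of $\sigma$, one obtains
\[
\tr\Pi_\rho\sigma=\sum_{j}s_j\,\langle v_j|\Pi_\rho|v_j\rangle=\sum_{j}s_j\sum_{\{i\,:\,r_i\neq0\}}\abs{\braket{u_i}{v_j}}^2 .
\]
As every summand is nonnegative, this vanishes if and only if $\braket{u_i}{v_j}=0$ for all $i,j$ with $r_i\neq0$ and $s_j\neq0$; by \eqref{eq:support-formula} that is exactly the statement that $\supp\rho=\overline{\spn}\{u_i\,:\,r_i\neq0\}$ is orthogonal to $\supp\sigma=\overline{\spn}\{v_j\,:\,s_j\neq0\}$, i.e.\ condition \ref{item:vEH-5-quantum}.

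Next, Theorem \ref{thm:4-veh-quantum} gives $D_0(\rho||\sigma)=-\log\tr\Pi_\rho\sigma$, so condition \ref{item:vEH-6-quantum} ($\tr\Pi_\rho\sigma=0$) is equivalent to $D_0(\rho||\sigma)=\infty$. By monotonicity, $D_0=\infty$ forces $D_\alpha\geq D_0=\infty$ for every $\alpha\in[0,\infty]$, which is \ref{item:vEH-8-quantum}, and $\ref{item:vEH-8-quantum}\Rightarrow\ref{item:vEH-7-quantum}$ is trivial. To close the loop I would prove $\ref{item:vEH-7-quantum}\Rightarrow\ref{item:vEH-5-quantum}$: suppose $D_\alpha(\rho||\sigma)=\infty$ for some $\alpha\in[0,1)$. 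If $\alpha=0$ this is condition \ref{item:vEH-6-quantum}, already handled. If $\alpha\in(0,1)$ then $\sigma^{1-\alpha}$ is bounded, Condition \ref{item:1-general} of Theorem \ref{thm:the-formula-general} holds automatically, and formula \eqref{eq:alpha-relative-entropy} applies with $\frac{1}{\alpha-1}<0$; hence $D_\alpha=\infty$ forces $\sum_{i,j}r_i^\alpha s_j^{1-\alpha}\abs{\braket{u_i}{v_j}}^2=0$, and, all terms being nonnegative, $\braket{u_i}{v_j}=0$ whenever $r_i\neq0$ and $s_j\neq0$, which is again $\supp\rho\perp\supp\sigma$.

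The one genuinely delicate point is this last implication $\ref{item:vEH-7-quantum}\Rightarrow\ref{item:vEH-5-quantum}$, where one must ensure that ``$D_\alpha=\infty$'' really forces the defining sum to \emph{vanish} rather than merely to be small; this hinges on the sign of $\frac{1}{\alpha-1}$ and on the fact that for $\alpha\in(0,1)$ the sum is automatically finite (bounded by $1$, by the remark after Definition \ref{defn:alpha-relative-entropy}), so no competing divergence can mask the vanishing. Alternatively, the whole theorem can be obtained at one stroke by passing to the classical measures $P,Q$ of \eqref{eq:P-and-Q} via Theorem \ref{thm:quantum-reduces-to-classical}: one checks that $\supp\rho\perp\supp\sigma$ holds if and only if $P$ and $Q$ are mutually singular (their supports in $\mathcal{I}\times\mathcal{I}$ are disjoint), and then invokes the classical counterpart for R\'enyi divergence.
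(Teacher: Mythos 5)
Your proposal is correct, and its skeleton (orthogonality of supports $\Leftrightarrow$ $\tr\Pi_\rho\sigma=0$ $\Leftrightarrow$ $D_0=\infty$ via Theorem \ref{thm:4-veh-quantum}, then monotonicity to get all $\alpha$) matches the paper's. The one place you genuinely diverge is in closing the loop from condition \ref{item:vEH-7-quantum}: the paper disposes of \ref{item:vEH-7-quantum} $\Leftrightarrow$ \ref{item:vEH-8-quantum} by citing Theorem \ref{thm:16-veh-quantum} (the two-sided comparison $\frac{\alpha}{\beta}\frac{1-\beta}{1-\alpha}D_\beta\leq D_\alpha\leq D_\beta$ for $0<\alpha\leq\beta<1$, which pushes $D_\alpha=\infty$ down to all smaller orders and hence to $D_0$), whereas you argue \ref{item:vEH-7-quantum} $\Rightarrow$ \ref{item:vEH-5-quantum} directly from formula \eqref{eq:alpha-relative-entropy}: for $\alpha\in(0,1)$ the defining sum is automatically finite (indeed $\leq 1$) and the prefactor $\frac{1}{\alpha-1}$ is negative, so $D_\alpha=\infty$ can only occur if the sum vanishes term by term, which is exactly $\supp\rho\perp\supp\sigma$. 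Both arguments are sound; yours is more self-contained (it needs only the explicit formula and the boundedness remark after Definition \ref{defn:alpha-relative-entropy}, not the imported classical Theorem \ref{thm:16-veh}), while the paper's is shorter given that Theorem \ref{thm:16-veh-quantum} has already been established. You also correctly flag the only delicate point, namely that no competing divergence of the sum can occur for $\alpha\in(0,1)$. Your alternative one-line route via Theorem \ref{thm:quantum-reduces-to-classical} and the classical Theorem \ref{thm:24-veh} (checking $\supp\rho\perp\supp\sigma\Leftrightarrow P\perp Q$) is also valid and is closest in spirit to how the paper handles the neighbouring Theorem \ref{thm:23-veh-quantum}.
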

\begin{proof}
\ref{item:vEH-5-quantum}  $\Leftrightarrow$ $\supp \sigma \subseteq \ran (I-\Pi_{\rho})$ $\Leftrightarrow$ $\tr (I-\Pi_{\rho})\sigma = \tr \sigma =1$ $\Leftrightarrow$ the statement  \ref{item:vEH-6-quantum}.

By Theorem \ref{thm:4-veh-quantum}, the statement \ref{item:vEH-6-quantum} $\Leftrightarrow$ $D_0(\rho||\sigma) = \infty$  $\Leftrightarrow$ statement \ref{item:vEH-8-quantum} by Theorem \ref{thm:3-veh-quantum}.

Finally, \ref{item:vEH-7-quantum} $\Leftrightarrow$ \ref{item:vEH-8-quantum} because of Theorem \ref{thm:16-veh-quantum}.
\end{proof}
We have the following corollary by combining Theorem \ref{thm:24-veh-quantum} and  Theorem  \ref{thm:Petz-renyi-equals-haagerup}. 

\begin{cor}\begin{enumerate}
    \item $\supp \rho \not\perp \supp \sigma$ if and only if $D_{\alpha}(\rho\|\sigma)<\infty$  for all $\alpha \in [0,1)$.
    \item If $\supp \rho \not\subseteq \supp \sigma$ then $D_\alpha(\rho\|\sigma) = \infty$  for all $\alpha>1$.
    \item If $\supp \rho \subseteq \supp \sigma$ and $\alpha >1$, then  $D_\alpha(\rho\|\sigma)< \infty$ if and only if  $\int\limits_{0}^\infty\lambda \mu^{\rho^\alpha, \sigma^{(1-\alpha)}}(\dd \lambda)<\infty$.
\end{enumerate}
\end{cor}
\begin{prop}\label{cor:2-veh-quantum}
The function $[0, \infty] \ni \alpha \mapsto (\alpha-1)D_{\alpha}(\rho||\sigma)$ is convex, with the conventions that it is $0$ at $\alpha = 1$ even if $D(\rho||\sigma) = \infty$ and that it is $0$ at $\alpha = \infty$ if $\rho=\sigma$.

\end{prop}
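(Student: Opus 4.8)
The plan is to exploit the framework of this paper and reduce the concavity statement to a single convexity estimate for the sum appearing in Theorem~\ref{thm:the-formula-general}. The key observation is that the prefactor $(1-\alpha)$ is designed to cancel the $\tfrac{1}{\alpha-1}$ in front of the logarithm. Writing $S(\alpha) = \sum_{i,j} r_i^{\alpha} s_j^{1-\alpha}\abs{\braket{u_i}{v_j}}^2$, equation~(\ref{eq:alpha-relative-entropy}) gives, for $\alpha \in (0,1)\cup(1,\infty)$,
\[
(1-\alpha)\,D_\alpha(\rho||\sigma) = -\log S(\alpha).
\]
Thus concavity of $\alpha \mapsto (1-\alpha)D_\alpha(\rho||\sigma)$ is equivalent to convexity of $\alpha\mapsto \log S(\alpha)$, that is, to \emph{log-convexity} of $S$.

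Next I would establish log-convexity of $S$ on $(0,\infty)$ by Hölder's inequality. Fix $\alpha_0,\alpha_1$ and $\theta\in[0,1]$, and set $\alpha = \theta\alpha_0 + (1-\theta)\alpha_1$. The elementary factorisation
\[
r_i^{\alpha}s_j^{1-\alpha}\abs{\braket{u_i}{v_j}}^2 = \left(r_i^{\alpha_0}s_j^{1-\alpha_0}\abs{\braket{u_i}{v_j}}^2\right)^{\theta}\left(r_i^{\alpha_1}s_j^{1-\alpha_1}\abs{\braket{u_i}{v_j}}^2\right)^{1-\theta},
\]
combined with Hölder's inequality applied with conjugate exponents $1/\theta$ and $1/(1-\theta)$ to the nonnegative summands, yields $S(\alpha) \le S(\alpha_0)^{\theta}S(\alpha_1)^{1-\theta}$, i.e. $\log S(\alpha) \le \theta\log S(\alpha_0) + (1-\theta)\log S(\alpha_1)$. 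Hence $-\log S$ is concave wherever finite. Alternatively, since Theorem~\ref{thm:quantum-reduces-to-classical} identifies $D_\alpha(\rho||\sigma)$ with the classical Rényi divergence $D_\alpha(P||Q)$ of the measures in~(\ref{eq:P-and-Q}), this step is exactly the classical log-convexity of $\sum_x P(x)^\alpha Q(x)^{1-\alpha}$ and may instead be quoted from the appendix, which is the route most in keeping with the paper's method.

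The remaining and genuinely delicate part is extending concavity to the closed interval $[0,\infty]$ and checking that the stated conventions at $\alpha=1$ and $\alpha=\infty$ are precisely those forced by concavity. I would first note that $S$ is lower semicontinuous (a sum of nonnegative continuous terms), so $-\log S$ is upper semicontinuous, the correct regularity for an extended-real-valued concave function. The Hölder bound persists in $[0,\infty]$ when some $S(\alpha_k)$ equals $0$ or $+\infty$, so only the boundary values must be computed: at $\alpha=1$, using $s_j^0=1$ (including $0^0=1$) and $\sum_j \abs{\braket{u_i}{v_j}}^2 = 1$ one gets $S(1)=\sum_i r_i = 1$, whence $-\log S(1)=0$, matching the convention even when $D(\rho||\sigma)=\infty$; and at $\alpha=\infty$ with $\rho=\sigma$ one has $D_\alpha \equiv 0$, so the function is identically $0$, again consistent. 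The main obstacle is therefore the bookkeeping of the extended arithmetic ($0\cdot\infty$, $-\log 0 = +\infty$, $-\log\infty = -\infty$) so that $h(\alpha)\ge \theta h(\alpha_0)+(1-\theta)h(\alpha_1)$ survives at the endpoints when $h$ takes the values $\pm\infty$; since the interior behaviour and semicontinuity are already settled, this reduces to checking the finitely many degenerate configurations (orthogonal supports, $\supp\rho\not\subseteq\supp\sigma$, and $\rho=\sigma$) against the characterisations in Theorem~\ref{thm:24-veh-quantum} and Theorem~\ref{thm:8-veh-quantum}.
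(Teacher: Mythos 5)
Your proposal is correct, but it takes a more self-contained route than the paper, whose entire proof is the one-line reduction ``Follows from Corollary~\ref{cor:2-veh} and Theorem~\ref{thm:quantum-reduces-to-classical}'' --- i.e.\ the paper simply transports the classical concavity statement for $D_\alpha(P\|Q)$ through the identification with the probability measures $P,Q$ of \eqref{eq:P-and-Q}. You instead unpack the underlying mechanism: the observation that $(1-\alpha)D_\alpha(\rho\|\sigma)=-\log S(\alpha)$ with $S(\alpha)=\sum_{i,j}r_i^\alpha s_j^{1-\alpha}\abs{\braket{u_i}{v_j}}^2$, followed by the H\"older argument $S(\theta\alpha_0+(1-\theta)\alpha_1)\le S(\alpha_0)^\theta S(\alpha_1)^{1-\theta}$, which is a correct and standard proof of log-convexity (your factorisation of the summand checks out, since the exponents of $r_i$, $s_j$ and $\abs{\braket{u_i}{v_j}}^2$ all recombine correctly). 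What your approach buys is transparency: the reader sees exactly why the prefactor $(1-\alpha)$ linearises the statement and where the convexity comes from, and the endpoint conventions at $\alpha=1$ and $\alpha=\infty$ are verified rather than inherited. What the paper's approach buys is brevity and uniformity with the rest of Section~3, where every such property is obtained by quoting \cite{Erven-Harremos-2014}; you acknowledge this alternative yourself. The only soft spot in your write-up is the last paragraph, where the extended-arithmetic bookkeeping at the endpoints is described rather than carried out --- acceptable here because upper semicontinuity plus the interior H\"older bound does force the inequality at boundary points taking the value $\pm\infty$, but if you wanted a fully rigorous standalone proof you would need to write out those few degenerate cases explicitly rather than gesture at Theorems~\ref{thm:8-veh-quantum} and~\ref{thm:24-veh-quantum}.
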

\begin{proof}
Follows from Corollary \ref{cor:2-veh} and Theorem \ref{thm:quantum-reduces-to-classical}.
\end{proof}

\appendix
\section{ Classical Divergences}\label{appendix:classical-divergences}
In this section, we recall a few facts about the $f$-divergences and the R\'enyi divergence in the setting of classical probability. We refer to \cite{Liese-Vajda-2006} and the survey article \cite{Erven-Harremos-2014} for the following definitions and results which we state in this section. The results  from \cite{Erven-Harremos-2014} which we use in this article  are repeated here for the ease of the reader. If $\mu$ and $\nu$ are two positive measures on a measure space $(X,\mathcal{F})$, then $\nu$ is said to be absolutely continuous with respect to $
\mu$ and we write $\nu\ll \mu$, if for every $E\in \mathcal{F}$ such that $\mu (E) = 0$, then $\nu(E) = 0$.
\begin{defn}\label{defn:renyi-divergence-Kullback}
\begin{enumerate}
     \item The R\'enyi divergence of order $\alpha\in (0,1)\cup (1,\infty)$ is defined as \begin{equation}
     \label{eq:classical-renyi-alpha}
     D_{\alpha}(P||Q) = \frac{1}{\alpha-1} \log D_{f_\alpha}(P||Q), 
 \end{equation}
 where \[f_\alpha(\lambda) = \lambda^\alpha,\quad\lambda\in (0,\infty).\]
  It may be noted that the quantity \[D_{f_\alpha}(P||Q)= \int_X p^\alpha q^{1-\alpha} \dd \mu, \]
  where for $\alpha>1$, we adopt the conventions $0^{1-\alpha} = \infty$ and $0\cdot \infty = 0$.
 \item\label{defn:kullback-leibler} The Kullback-Leibler divergence of $P$ from $Q$ is defined as 
 \begin{equation}
  D(P||Q) =  D_f(P||Q) \int p\log \frac{p}{q} \dd \mu,
 \end{equation}
 where \[f(\lambda) = \lambda\log\lambda, \quad\lambda\in (0,\infty).\]
 It may be noted that
 \[ D_f(P||Q) =\int p\log \frac{p}{q} \dd \mu,\]
 with the conventions that $0\log (0/q) = 0$, for $q\geq0$ and $p\log(p/0)=\infty$ if $p>0$. Consequently, \begin{align}\label{eq:kullback-leibler-equals-infty}
     D(P||Q) = \infty,\quad \textnormal{ if }  P\nll Q.
 \end{align}
 \end{enumerate}
 \end{defn}
\begin{defn}\label{defn:extended-orders}
 The R\'enyi divergences of orders $0, 1$ and $\infty$ are defined as \begin{align*}
     D_0(P||Q) &= \lim\limits_{\alpha \downarrow 0} D_{\alpha}(P||Q)\\
     D_1(P||Q) &= \lim\limits_{\alpha \uparrow 1} D_{\alpha}(P||Q)\\
     D_{\infty}(P||Q) &= \lim\limits_{\alpha \uparrow \infty} D_{\alpha}(P||Q)
 \end{align*}
\end{defn}
The limits in Definition \ref{defn:extended-orders} always exist because R\'enyi divergence is nondecreasing in order. 
\begin{thm}\cite[Theorem 3]{Erven-Harremos-2014}. \label{thm:3-veh}
For $\alpha\in [0, \infty]$ the R\'enyi divergence $D_{\alpha}(P||Q)$ is nondecreasing in $\alpha$.
\end{thm}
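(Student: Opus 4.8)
The plan is to reduce the monotonicity to the \emph{chord-slope monotonicity of a convex function}. Fix a common dominating $\sigma$-finite measure $\mu$ with densities $p = \dv{P}{\mu}$ and $q = \dv{Q}{\mu}$, and set
\[
\Lambda(\alpha) := \log \int_X p^\alpha q^{1-\alpha}\, \dd \mu \in (-\infty, +\infty],
\]
so that by Definition \ref{defn:renyi-divergence-Kullback} one has $D_\alpha(P||Q) = \Lambda(\alpha)/(\alpha - 1)$ for every $\alpha \in (0,1)\cup(1,\infty)$. Since the orders $0,1,\infty$ are defined in Definition \ref{defn:extended-orders} as one-sided limits of $D_\alpha$, it suffices to prove that $\alpha \mapsto D_\alpha$ is nondecreasing on $(0,1)\cup(1,\infty)$; monotonicity then passes to the closure under these limits.

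The key step is that $\Lambda$ is convex on $(0,\infty)$. Indeed, for $\alpha = \theta\alpha_0 + (1-\theta)\alpha_1$ with $\theta \in (0,1)$, I would factor the integrand as $p^\alpha q^{1-\alpha} = (p^{\alpha_0}q^{1-\alpha_0})^\theta (p^{\alpha_1}q^{1-\alpha_1})^{1-\theta}$ and apply H\"older's inequality with conjugate exponents $1/\theta$ and $1/(1-\theta)$; taking logarithms gives $\Lambda(\alpha) \le \theta\Lambda(\alpha_0) + (1-\theta)\Lambda(\alpha_1)$, the inequality being trivial when the right-hand side is $+\infty$. Next I record the normalization $\Lambda(1) = \log P(\{q>0\}) = \log 1 = 0$, valid whenever $P \ll Q$. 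Consequently
\[
D_\alpha(P||Q) = \frac{\Lambda(\alpha) - \Lambda(1)}{\alpha - 1},
\]
which is exactly the slope of the secant of the convex function $\Lambda$ joining the fixed pivot $(1, \Lambda(1))$ to the moving point $(\alpha, \Lambda(\alpha))$. A standard property of convex functions is that such a secant slope is nondecreasing in the second abscissa $\alpha$ over the whole domain minus the pivot, which yields monotonicity on $(0,1)$ and on $(1,\infty)$ simultaneously. The passage \emph{through} $\alpha = 1$ is automatic as well: the left and right limits of the secant slope are the one-sided derivatives $\Lambda'(1^-) \le \Lambda'(1^+)$, so $D_1 = \lim_{\alpha\uparrow 1}D_\alpha \le \lim_{\alpha\downarrow 1}D_\alpha$, consistently with Definition \ref{defn:extended-orders}.

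It remains to treat the case $P \nll Q$, which I expect to be the main technical nuisance rather than the heart of the matter. For $\alpha > 1$ the convention $0^{1-\alpha} = \infty$ forces $\int p^\alpha q^{1-\alpha}\,\dd\mu = \infty$, hence $D_\alpha = \infty$, so monotonicity on $(1,\infty]$ and across the jump is immediate. On $(0,1)$ the same convexity argument applies with $\Lambda(1) = \log P(\{q>0\}) =: c \le 0$, and I would split $D_\alpha = \frac{\Lambda(\alpha)-c}{\alpha-1} + \frac{c}{\alpha-1}$: the first summand is nondecreasing by the secant-slope argument, while $\dv{\alpha}\frac{c}{\alpha-1} = -c/(\alpha-1)^2 \ge 0$, so the second summand is nondecreasing too. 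The only genuine subtleties are thus the careful handling of the conventions $0\cdot\infty = 0$ and $0^{1-\alpha}=\infty$ inside H\"older's inequality, and the verification that the secant-slope lemma remains valid when $\Lambda$ attains the value $+\infty$; both are routine once the convexity of $\Lambda$ is established.
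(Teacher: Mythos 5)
Your argument is correct, but note that the paper does not actually prove this statement: Theorem \ref{thm:3-veh} is imported verbatim from the survey of van Erven and Harrem\"oes, so there is no in-paper proof to match. Your route (convexity of $\Lambda(\alpha)=\log\int p^{\alpha}q^{1-\alpha}\,\dd\mu$ via H\"older, then monotonicity of secant slopes through the pivot $(1,\Lambda(1))$) is a clean, self-contained alternative to the proof in the cited source, which instead fixes two orders $\alpha<\beta$ and applies Jensen's inequality to a suitable power function of the likelihood ratio to compare $D_\alpha$ and $D_\beta$ directly. The two mechanisms are of course cousins (H\"older and Jensen are interchangeable here), but yours buys a little more: convexity of $\Lambda$ immediately gives the three-chord comparison across $\alpha=1$ and the concavity statement of Corollary \ref{cor:2-veh} as byproducts, whereas the pairwise Jensen argument needs separate bookkeeping for the transitions between $(0,1)$, $\{1\}$, and $(1,\infty)$. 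Two small points you should make explicit if you write this up: (i) your pivot value $\Lambda(1)=\log P(\{q>0\})$ presupposes the convention $0^{0}=0$ in the integrand, which is the left-continuous choice and is the one compatible with H\"older on $\{q=0\}$, but it should be stated; (ii) in the case $P\perp Q$ one has $\Lambda\equiv-\infty$ on $(0,1)$, so the secant-slope decomposition degenerates and that case must be dispatched separately (there $D_\alpha\equiv\infty$, so monotonicity is trivial). With those caveats the proof is complete.
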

\begin{thm}
 \cite[Theorem 4]{Erven-Harremos-2014}. \label{thm:4-veh}
\begin{equation}\label{eq:limit-at-0}D_0 (P||Q) = -\log (Q ({\{p>0\}})).
\end{equation}
\end{thm}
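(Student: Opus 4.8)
The plan is to compute $D_0(P||Q) = \lim_{\alpha\downarrow 0} D_\alpha(P||Q)$ directly from the defining formula $D_\alpha(P||Q) = \frac{1}{\alpha-1}\log \int_X p^\alpha q^{1-\alpha}\dd\mu$ in Definition \ref{defn:renyi-divergence-Kullback}. Since $\frac{1}{\alpha-1}\to -1$ as $\alpha\downarrow 0$, and $\log$ is continuous on $(0,\infty)$ with the convention $\log 0 = -\infty$, the whole problem reduces to evaluating $\lim_{\alpha\downarrow 0}\int_X p^\alpha q^{1-\alpha}\dd\mu$ and showing that it equals $Q(\{p>0\})$. Note that the existence of the limit defining $D_0$ is already guaranteed by the monotonicity in Theorem \ref{thm:3-veh}, so no separate existence argument is needed.

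To make the interchange of limit and integral clean, I would first exploit the freedom in the choice of the dominating measure $\mu$ established in the Remark following Definition \ref{defn:renyi-divergence-Kullback} and take $\mu = \mu_0 = \frac{P+Q}{2}$. This is a probability measure with $P\ll\mu_0$ and $Q\ll\mu_0$, and it has the decisive feature that $p+q = 2$ almost everywhere, so that $0\le p,q\le 2$. With bounded densities the integrand is uniformly controlled: for every $\alpha\in(0,1)$ one has $p^\alpha \le \max(1,p)\le 2$ and $q^{1-\alpha}\le\max(1,q)\le 2$, hence $p^\alpha q^{1-\alpha}\le 4$, a constant that is integrable against the finite measure $\mu_0$.

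Next I would identify the pointwise limit of the integrand as $\alpha\downarrow 0$. On $\{p=0\}$ the integrand vanishes for every $\alpha\in(0,1)$; on $\{p>0,\,q=0\}$ it also vanishes for $\alpha<1$; and on $\{p>0,\,q>0\}$ it converges to $p^0q^1 = q$. In every case $p^\alpha q^{1-\alpha}\to \mathbf{1}_{\{p>0\}}\,q$ pointwise $\mu_0$-almost everywhere. The dominated convergence theorem, with the dominating constant $4$ from the previous step, then yields
\[
\lim_{\alpha\downarrow 0}\int_X p^\alpha q^{1-\alpha}\dd\mu_0 = \int_X \mathbf{1}_{\{p>0\}}\,q\,\dd\mu_0 = \int_{\{p>0\}} q\,\dd\mu_0 = Q(\{p>0\}).
\]
Combining this with $\frac{1}{\alpha-1}\to -1$ and the continuity of $\log$ gives $D_0(P||Q) = -\log Q(\{p>0\})$, with the boundary case $Q(\{p>0\}) = 0$ (which forces $P\perp Q$) producing $-\log 0 = +\infty$, consistent with the stated conventions.

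The only genuine obstacle is the passage of the limit inside the integral, since $\alpha\mapsto p^\alpha q^{1-\alpha}$ is not monotone in general and the densities need not be bounded for an arbitrary reference measure $\mu$. The choice $\mu_0 = \frac{P+Q}{2}$ is precisely what removes this difficulty, by forcing $p,q\le 2$ and thereby supplying the constant dominating function; everything else is routine.
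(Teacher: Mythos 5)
Your proof is correct. Note that the paper itself does not prove this statement at all: it is imported verbatim as Theorem 4 of the cited survey of van Erven and Harremo\"es, so there is no in-paper argument to compare against. Your argument is the standard one and is sound: the reduction to $\lim_{\alpha\downarrow 0}\int p^\alpha q^{1-\alpha}\,\dd\mu$ is legitimate because $\tfrac{1}{\alpha-1}\to -1$ and the $Q(\{p>0\})=0$ boundary case is handled separately; the choice $\mu_0=\tfrac{P+Q}{2}$ (already justified in the paper's remark on independence of the reference measure) gives $p+q=2$ a.e.\ and hence the constant dominating function; the pointwise limit $\mathbf{1}_{\{p>0\}}q$ is identified correctly on each of the three regions; and dominated convergence finishes it. A cosmetic remark: you could replace the bound $p^\alpha q^{1-\alpha}\le 4$ by the weighted AM--GM inequality $p^\alpha q^{1-\alpha}\le \alpha p+(1-\alpha)q\le p+q$, which supplies a dominating function for an arbitrary $\sigma$-finite reference measure $\mu$ with $P\ll\mu$, $Q\ll\mu$ (since $p+q$ is $\mu$-integrable), making the special choice of $\mu_0$ unnecessary; but this is a simplification, not a correction.
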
   
\begin{thm} \cite[Theorem 5]{Erven-Harremos-2014}.\label{thm:5-veh}
The Kullback-Leibler divergence is the limit of the R\'enyi divergence, i.e.,
\begin{equation}\label{eq:kullback-equals-left-limit-1}
    D(P||Q)= D_1(P||Q).
\end{equation}
Moreover, if $D(P||Q) = \infty$ or there exists $\beta>1$ such that $D_\beta (P||Q)<\infty$, then also \begin{equation}\label{eq:kullback-equals-right-limit-1}
    \lim\limits_{\alpha \downarrow 1} D_{\alpha}(P||Q) = D(P||Q).
\end{equation}
\end{thm}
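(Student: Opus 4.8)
The plan is to reduce both one-sided limits to the elementary fact that the scaled cumulant generating function of an integrable random variable tends to its mean, while tracking integrability carefully. First I would rewrite the R\'enyi integral as an expectation under $P$. Setting $g=\log(q/p)$ on $\{p>0\}$ (with $g=-\infty$ where $q=0$), one has for $\alpha\in(0,1)$ that $\int p^\alpha q^{1-\alpha}\,\dd\mu=\int_{\{p>0\}}(q/p)^{1-\alpha}\,\dd P=\mathbb{E}_P[e^{(1-\alpha)g}]$, so with $\beta=1-\alpha$,
\[
D_\alpha(P\|Q)=-\frac{1}{\beta}\log \mathbb{E}_P[e^{\beta g}],\qquad \mathbb{E}_P[g]=-D(P\|Q).
\]
Since Theorem \ref{thm:3-veh} guarantees the monotone limit $\lim_{\alpha\uparrow1}D_\alpha$ exists in $[0,\infty]$, it suffices to identify its value.

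For the left limit I would sandwich $\tfrac1\beta\log\mathbb{E}_P[e^{\beta g}]$ between two quantities both converging to $\mathbb{E}_P[g]$. The lower bound is Jensen's inequality for the concave $\log$: $\log\mathbb{E}_P[e^{\beta g}]\ge\beta\,\mathbb{E}_P[g]$. The upper bound uses $\log y\le y-1$: $\tfrac1\beta\log\mathbb{E}_P[e^{\beta g}]\le \mathbb{E}_P\!\big[\tfrac{e^{\beta g}-1}{\beta}\big]$. The analytic heart is that $\beta\mapsto\tfrac{e^{\beta g}-1}{\beta}$ is nondecreasing (because $\varphi(s)=e^s(s-1)+1\ge0$ for all real $s$, having a global minimum $0$ at $s=0$), so it decreases to $g$ as $\beta\downarrow0$; monotone convergence then gives $\mathbb{E}_P[\tfrac{e^{\beta g}-1}{\beta}]\to\mathbb{E}_P[g]$, even when $\mathbb{E}_P[g]=-\infty$. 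The application of MCT is legitimate because $\mathbb{E}_P[e^{\beta g}]=\int p^\alpha q^{1-\alpha}\,\dd\mu\le1$ by H\"older's inequality, so the first member of the monotone family has finite integral. The sandwich forces $\tfrac1\beta\log\mathbb{E}_P[e^{\beta g}]\to\mathbb{E}_P[g]$, i.e.\ $D_1(P\|Q)=D(P\|Q)$. When $P\nll Q$ we have $\mathbb{E}_P[g]=-\infty$ and $D(P\|Q)=\infty$ by (\ref{eq:kullback-leibler-equals-infty}), so both sides are $+\infty$, consistent with Theorem \ref{thm:4-veh}.

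The right limit is handled symmetrically with $h=-g=\log(p/q)$ and $\gamma=\alpha-1>0$: for $\alpha>1$ one has $D_\alpha=\tfrac1\gamma\log\mathbb{E}_P[e^{\gamma h}]$ and $\mathbb{E}_P[h]=D(P\|Q)$. Monotonicity already yields $\lim_{\alpha\downarrow1}D_\alpha\ge D(P\|Q)$, so only the reverse inequality uses the extra hypotheses. If $D(P\|Q)=\infty$ it is automatic. Otherwise, if some $\beta_0>1$ has $D_{\beta_0}<\infty$, then $\mathbb{E}_P[e^{\gamma_0 h}]<\infty$ for $\gamma_0=\beta_0-1$; since $\tfrac{e^{\gamma h}-1}{\gamma}$ is nondecreasing in $\gamma$, for $\gamma<\gamma_0$ it is dominated by the $P$-integrable function $\tfrac{e^{\gamma_0 h}-1}{\gamma_0}$, and dominated convergence gives $\mathbb{E}_P[\tfrac{e^{\gamma h}-1}{\gamma}]\to\mathbb{E}_P[h]$, whence $\limsup_{\gamma\downarrow0}\tfrac1\gamma\log\mathbb{E}_P[e^{\gamma h}]\le D(P\|Q)$.

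The main obstacle I anticipate is not a single deep step but careful bookkeeping at the boundary: (i) establishing the monotonicity of $t\mapsto(e^{tg}-1)/t$ that powers both convergence arguments; (ii) confirming (again via $\log y\le y-1$) that $\mathbb{E}_P[(\log(p/q))^-]\le1$, so that $D(P\|Q)=\mathbb{E}_P[h]$ is a well-defined element of $[0,\infty]$; and (iii) treating the degenerate cases $P\nll Q$ and $D(P\|Q)=\infty$ under the conventions $0\cdot\infty=0$ and $0^{1-\alpha}=\infty$ so that the rewriting as an expectation under $P$ stays valid throughout.
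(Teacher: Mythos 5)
Your proof is correct. Note, however, that the paper does not prove this statement at all: it is quoted verbatim as Theorem 5 of the cited survey of van Erven and Harrem\"oes and simply imported into the appendix, so there is no in-paper argument to compare against. Your self-contained route --- rewriting $\int p^{\alpha}q^{1-\alpha}\,\dd\mu$ as $\mathbb{E}_P[e^{(1-\alpha)g}]$ with $g=\log(q/p)$, sandwiching $\tfrac1\beta\log\mathbb{E}_P[e^{\beta g}]$ between the Jensen bound and $\mathbb{E}_P[(e^{\beta g}-1)/\beta]$, and exploiting the monotonicity of $t\mapsto(e^{tg}-1)/t$ (via $e^{s}(s-1)+1\ge0$) to pass to the limit --- is sound, and you track the degenerate cases ($P\nll Q$, $\mathbb{E}_P[g]=-\infty$, integrability of $(\log(p/q))^-$) correctly. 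The only point I would tighten is the ``dominated convergence'' step for $\alpha\downarrow1$: the bound $\frac{e^{\gamma h}-1}{\gamma}\le\frac{e^{\gamma_0 h}-1}{\gamma_0}$ is one-sided, so either invoke the monotone convergence theorem for decreasing families (the member at $\gamma_0$ has $P$-integrable positive part, and the pointwise limit $h$ has integrable negative part), or note that $\max\bigl(\abs{h},\abs{(e^{\gamma_0 h}-1)/\gamma_0}\bigr)$ is a genuine two-sided integrable dominating function since $\mathbb{E}_P[e^{\gamma_0 h}]<\infty$ forces $\mathbb{E}_P[h^+]<\infty$. Either repair is one line, and the rest of the argument stands as written.
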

\begin{rmk}
It is possible that $D_{\alpha}(P||Q) = \infty$ for all $\alpha>1$, but $D(P||Q)< \infty$, and hence (\ref{eq:kullback-equals-right-limit-1}) does not hold \cite{Erven-Harremos-2014}.
\end{rmk}
For any random variable $Y$, the \emph{essential supremum of} $Y$ with respect to $P$ is $\underset{P}{\operatorname{ess\ sup}}  Y = \sup\{c|P(Y>c)>0\}$.
\begin{thm}
\cite[Theorem 6 ]{Erven-Harremos-2014}.\label{thm:6-veh}
\begin{equation}
 \label{eq:limit-infty}
 D_{\infty}(P||Q) = \log \underset{A\in \mathcal{F}}{\sup}\frac{P(A)}{Q(A)} = \log \left(\underset{P}{\operatorname{ess\ sup}} \frac{p}{q}\right),
\end{equation}
with the convention that $0/0 = 0$ and $x/0 = \infty$ if $x>0$.
\end{thm}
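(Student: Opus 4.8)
The plan is to prove the two displayed equalities separately. Write $M = \operatorname{ess\,sup}_{P}(p/q)$ for the right-hand quantity, and treat the likelihood ratio $f = p/q$ (defined $\mu$-a.e., with $f=\infty$ on $\{q=0,\,p>0\}$) as the central object. First I would show that the set-function supremum $\sup_{A}P(A)/Q(A)$ equals $M$; then I would compute the limit $D_{\infty}(P||Q)=\lim_{\alpha\uparrow\infty}D_{\alpha}(P||Q)$, which exists by Theorem \ref{thm:3-veh}, and identify it with $\log M$.

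For the first equality I would split into the cases $P\ll Q$ and $P\nll Q$. If $P\nll Q$, there is $E$ with $Q(E)=0<P(E)$, so $P(E)/Q(E)=\infty$ by the convention $x/0=\infty$, while $f=\infty$ on a set of positive $P$-measure forces $M=\infty$; both sides are $\infty$. If $P\ll Q$, take $f=\dv{P}{Q}$. The key intermediate fact is that $f\le M$ holds $Q$-a.e.: since $P(\{f>M\})=0$ and $P\ll Q$ gives $dP=f\,dQ$, we have $\int_{\{f>M\}}f\,dQ=P(\{f>M\})=0$ with $f>M\ge 0$ on that set, forcing $Q(\{f>M\})=0$. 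Given this, for every $A$ with $Q(A)>0$ we get $P(A)=\int_{A}f\,dQ\le M\,Q(A)$, while $Q(A)=0$ forces $P(A)=0$; hence $\sup_{A}P(A)/Q(A)\le M$. Conversely, for any $c<M$ the set $A=\{f>c\}$ has $P(A)>0$, hence $Q(A)>0$, and $P(A)=\int_{A}f\,dQ\ge c\,Q(A)$, so $\sup_{A}P(A)/Q(A)\ge c$; letting $c\uparrow M$ gives the reverse inequality.

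For the second equality I would first rewrite the R\'enyi integrand. On $\{q>0\}$ we have $p^{\alpha}q^{1-\alpha}=f^{\alpha}q$, so for $\alpha>1$, $\int_{X}p^{\alpha}q^{1-\alpha}\,d\mu=\int f^{\alpha}\,dQ$ when $P\ll Q$ (and the integral is $+\infty$, consistent with $M=\infty$, when $P\nll Q$, since $q^{1-\alpha}=\infty$ on $\{q=0\}$). Thus for $P\ll Q$,
\[
D_{\alpha}(P||Q)=\frac{1}{\alpha-1}\log\int f^{\alpha}\,dQ=\frac{\alpha}{\alpha-1}\,\log\norm{f}_{L^{\alpha}(Q)}.
\]
It then remains to prove the standard $L^{\alpha}\to L^{\infty}$ convergence $\lim_{\alpha\uparrow\infty}\norm{f}_{L^{\alpha}(Q)}=\norm{f}_{L^{\infty}(Q)}=M$, using that the $P$- and $Q$-essential suprema of $f$ agree when $P\ll Q$. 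The upper bound $\norm{f}_{L^{\alpha}(Q)}\le M$ follows from $f\le M$ $Q$-a.e. together with $Q(X)=1$; the lower bound follows because for $c<M$ the number $\delta:=Q(\{f>c\})$ is positive, whence $\int f^{\alpha}\,dQ\ge c^{\alpha}\delta$ and $\norm{f}_{L^{\alpha}(Q)}\ge c\,\delta^{1/\alpha}\to c$. Since $\alpha/(\alpha-1)\to 1$, this yields $D_{\infty}(P||Q)=\log M$, and the case $M=\infty$ is covered by letting $c\uparrow\infty$ in the lower bound.

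The main obstacle, and really the only delicate point, is the interplay between the essential supremum taken with respect to $P$ and the one taken with respect to $Q$: the definition of $M$ gives $f\le M$ only $P$-a.e., whereas the estimates on $P(A)$ and on $\int f^{\alpha}\,dQ$ require control $Q$-a.e. Upgrading ``$P$-a.e.'' to ``$Q$-a.e.'' is exactly where absolute continuity $P\ll Q$ enters, and it is what makes the boundary conventions ($0/0=0$, $x/0=\infty$, and the $\{q=0\}$ contribution) fit together cleanly; I would isolate this upgrade as a small preliminary lemma before assembling the two chains of inequalities.
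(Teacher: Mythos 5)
Your proof is correct. Note, however, that the paper itself offers no proof of this statement: Theorem \ref{thm:6-veh} sits in the appendix as a result quoted verbatim from the survey of van Erven and Harrem\"oes, so there is no in-paper argument to compare yours against. Your route is the standard one for this fact: the case split on $P\ll Q$ versus $P\nll Q$, the identification $\sup_{A}P(A)/Q(A)=\operatorname{ess\,sup}_{P}(p/q)$ via the Radon--Nikodym representation $P(A)=\int_{A}f\,dQ$, and the reduction of the $\alpha\uparrow\infty$ limit to the convergence $\norm{f}_{L^{\alpha}(Q)}\to\norm{f}_{L^{\infty}(Q)}$ on the probability space $(X,Q)$. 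You also correctly isolate the one genuinely delicate point, namely upgrading the $P$-almost-everywhere bound $f\le M$ to a $Q$-almost-everywhere bound (equivalently, that the $P$- and $Q$-essential suprema of $f$ coincide under $P\ll Q$), which is exactly where absolute continuity is used and where the conventions $0/0=0$ and $x/0=\infty$ are reconciled with the $\{q=0\}$ contribution to the R\'enyi integral. All the individual steps check out, including the edge cases $M=\infty$ and $Q(A)=0$.
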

If the sample space $X$ is countable, then with the notational conventions of  Theorem~\ref{thm:6-veh} the 
$P$-essential supremum of $\frac{p}{q}$ reduces to the ordinary supremum of $\frac{p}{q}$, which in turn is equal to the 
supremum of $\frac{P}{Q}$, and we have 
\begin{equation}\label{eq:limit-infty-countable-sample-space}
    D_{\infty}(P||Q) = \log   \sup_x \frac{P(x)}{Q(x)},
\end{equation}
with the convention that $0/0 = 0$ and $x/0 = \infty$ if $x>0$.
\begin{thm}
 \cite[Theorem 7]{Erven-Harremos-2014}.\label{thm:7-veh}
The R\'enyi divergence $D_\alpha (P||Q)$ is continuous in $\alpha$ on $\mathcal{A} = \{\alpha\in [0,\infty]\,|\, 0\leq \alpha\leq 1\textnormal{ or } D_{\alpha}(P||Q)<\infty\}$.
\end{thm}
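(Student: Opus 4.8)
The plan is to reduce the whole statement to the continuity of the single scalar function $\phi(\alpha) := \int_X p^{\alpha} q^{1-\alpha}\,\dd\mu \in [0,\infty]$, because for $\alpha\in(0,1)\cup(1,\infty)$ Definition \ref{defn:renyi-divergence-Kullback} gives $D_{\alpha}(P||Q) = \frac{1}{\alpha-1}\log\phi(\alpha)$. First I would invoke Theorem \ref{thm:3-veh} to note that $\alpha\mapsto D_{\alpha}(P||Q)$ is nondecreasing, so $\mathcal{A}$ is an interval $[0,\alpha^{*})$ or $[0,\alpha^{*}]$ with $\alpha^{*}:=\sup\{\alpha\,:\,\alpha\le 1\textnormal{ or }D_{\alpha}(P||Q)<\infty\}\in[1,\infty]$; it then suffices to prove continuity (allowing the value $+\infty$) at each point of this interval.

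The key analytic step is continuity of $\phi$ wherever it is finite and positive. For fixed $p,q$ the map $t\mapsto p^{t}q^{1-t}=q\,(p/q)^{t}$ is convex (an exponential in $t$ on $\{q>0\}$), so on any compact $[\alpha_{0},\alpha_{1}]\subseteq\{\phi<\infty\}$ the integrand is dominated by the $\mu$-integrable majorant $p^{\alpha_{0}}q^{1-\alpha_{0}}+p^{\alpha_{1}}q^{1-\alpha_{1}}$. Dominated convergence then shows $\phi$ is continuous on the interior of $\{\phi<\infty\}$, and taking $\alpha_{1}=\alpha^{*}$ when $\phi(\alpha^{*})<\infty$ gives left-continuity of $\phi$ at $\alpha^{*}$. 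Moreover $\phi(\alpha)>0$ for every $\alpha\in(1,\alpha^{*})$ (finiteness for $\alpha>1$ forces $P\ll Q$, so the integrand is positive on a set of positive $\mu$-measure), and on $(0,1)$ the sign of $\phi$ is constant across all $\alpha$, being positive unless $P$ and $Q$ are mutually singular. Hence $\log\phi$ is finite and continuous on $(0,1)\cup(1,\alpha^{*})$, and dividing by the continuous nonvanishing factor $(\alpha-1)$ makes $D_{\alpha}(P||Q)$ continuous there and left-continuous at a finite $\alpha^{*}\in\mathcal{A}$.

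Next I would clear the degenerate and boundary cases. If $\phi$ vanishes identically on $(0,1)$ (mutual singularity) then $D_{\alpha}\equiv+\infty$ there, and if $\phi\equiv+\infty$ on $(1,\infty)$ then $D_{\alpha}\equiv+\infty$ on that sub-interval; a constant $+\infty$ is continuous in the extended sense. For the three distinguished orders, continuity at $\alpha=0$ is immediate because $D_{0}(P||Q)$ is \emph{defined} as $\lim_{\alpha\downarrow0}D_{\alpha}(P||Q)$ in Definition \ref{defn:extended-orders}, and likewise left-continuity at $\alpha=\infty$, when $\infty\in\mathcal{A}$, holds by that definition. Continuity at $\alpha=1$ is exactly Theorem \ref{thm:5-veh}: its first assertion gives $\lim_{\alpha\uparrow1}D_{\alpha}(P||Q)=D(P||Q)=D_{1}(P||Q)$, and its second assertion supplies the matching right-limit precisely under the hypothesis that some $\beta>1$ lies in $\mathcal{A}$ (if no such $\beta$ exists then $1$ is the right endpoint of $\mathcal{A}$ and only the left-limit is needed).

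Assembling the interior continuity, the left-continuity at a finite endpoint $\alpha^{*}$, the definitional continuity at $0$ and $\infty$, and Theorem \ref{thm:5-veh} at $\alpha=1$ yields continuity at every point of $\mathcal{A}$. I expect the main obstacle to be the finite right endpoint $\alpha^{*}$ with $D_{\alpha^{*}}<\infty$: there one must rule out a downward jump of the monotone function as $\alpha\uparrow\alpha^{*}$, which is exactly where the dominated-convergence argument with the integrable majorant $p^{\alpha^{*}}q^{1-\alpha^{*}}$ is indispensable. A secondary source of care is the consistent bookkeeping of the extended value $+\infty$ together with the support cases $P\perp Q$ and $P\nll Q$, so that the formula $D_{\alpha}(P||Q)=\log\phi(\alpha)/(\alpha-1)$ is interpreted correctly throughout.
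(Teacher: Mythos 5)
Your proof is correct, but note that the paper itself states this result without proof, quoting it verbatim as Theorem 7 of \cite{Erven-Harremos-2014}; the relevant comparison is therefore with that reference, whose argument is essentially the one you give: dominated convergence with the convexity majorant $p^{\alpha_0}q^{1-\alpha_0}+p^{\alpha_1}q^{1-\alpha_1}$ on compact subintervals where $\int p^{\alpha}q^{1-\alpha}\,\dd\mu$ is finite (including left-continuity at a finite right endpoint $\alpha^{*}$ of $\mathcal{A}$), definitional continuity at $\alpha=0$ and $\alpha=\infty$ as in Definition \ref{defn:extended-orders}, and Theorem \ref{thm:5-veh} for the two-sided limit at $\alpha=1$. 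Your handling of the degenerate cases ($P\perp Q$ giving $\phi\equiv 0$ on $(0,1)$, and $P\nll Q$ forcing $\mathcal{A}=[0,1]$) and of the extended value $+\infty$ is likewise sound, so there is nothing to fix.
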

\begin{thm}\cite[Theorem 8]{Erven-Harremos-2014}.\label{thm:8-veh} For any order $\alpha\in [0,\infty]$ 
\[D_{\alpha}(P||Q)\geq 0.\] For $\alpha>0$, $D_{\alpha}(P||Q) = 0$ if and only if $P=Q$. For $\alpha = 0$, $D_{\alpha}(P||Q) = 0$ if and only if $Q\ll P$.
\end{thm}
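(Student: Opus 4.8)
The plan is to reduce everything, for orders $\alpha\in(0,1)\cup(1,\infty)$, to an expectation of a power of the likelihood ratio under $Q$ and then apply Jensen's inequality, handling the boundary orders $0,1,\infty$ separately by means of the cited Theorems~\ref{thm:4-veh} and~\ref{thm:6-veh}. First I would fix a dominating $\sigma$-finite $\mu$ with densities $p,q$ as in Definition~\ref{defn:renyi-divergence-Kullback} and set $S=\{q>0\}$. Since $Q\ll\mu$ one has $Q(S^{c})=\int_{\{q=0\}}q\,\dd\mu=0$, so $S$ carries all of $Q$; on $S$ put $W=p/q$, a nonnegative function with $\int_{S}W\,\dd Q=\int_{S}p\,\dd\mu=P(S)\le 1$, and $P(S)=1$ precisely when $P\ll Q$.

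For $0<\alpha<1$ the integrand $p^{\alpha}q^{1-\alpha}$ vanishes off $S$, so $\int_{X}p^{\alpha}q^{1-\alpha}\,\dd\mu=\int_{S}W^{\alpha}\,\dd Q$; concavity of $t\mapsto t^{\alpha}$ and Jensen's inequality give $\int_{S}W^{\alpha}\,\dd Q\le\big(\int_{S}W\,\dd Q\big)^{\alpha}\le 1$, whence $D_{\alpha}(P||Q)=\tfrac{1}{\alpha-1}\log(\cdots)\ge 0$. For $\alpha>1$, if $P\not\ll Q$ the convention $0^{1-\alpha}=\infty$ forces the integral, hence $D_{\alpha}$, to be $\infty$; otherwise the integral equals $\int_{S}W^{\alpha}\,\dd Q\ge\big(\int_{S}W\,\dd Q\big)^{\alpha}=1$ by convexity, again giving $D_{\alpha}\ge 0$. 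The case $\alpha=1$ is the Gibbs inequality: when $P\ll Q$, $D(P||Q)=\int_{S}W\log W\,\dd Q\ge\big(\int_{S}W\,\dd Q\big)\log\big(\int_{S}W\,\dd Q\big)=0$ by Jensen applied to the strictly convex $t\log t$, and $D(P||Q)=\infty$ otherwise.

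For the equality statements with $\alpha>0$, the direction $P=Q\Rightarrow D_{\alpha}=0$ is immediate from $p=q$ $\mu$-a.e. Conversely, strict concavity (resp.\ strict convexity) of $t^{\alpha}$ and of $t\log t$ makes each Jensen step an equality exactly when $W$ is $Q$-a.e.\ constant, while the auxiliary inequality $P(S)\le 1$ becomes an equality exactly when $P\ll Q$; together these force $W\equiv 1$ $Q$-a.e., i.e.\ $P=Q$. The boundary orders follow from the cited formulas: by Theorem~\ref{thm:4-veh}, $D_{0}(P||Q)=-\log Q(\{p>0\})\ge 0$ since $Q(\{p>0\})\le 1$, and $D_{0}=0$ iff $Q(\{p=0\})=0$, which, because $P(\{p=0\})=\int_{\{p=0\}}p\,\dd\mu=0$ always holds, is exactly $Q\ll P$. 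By Theorem~\ref{thm:6-veh}, $D_{\infty}(P||Q)=\log\sup_{A}P(A)/Q(A)\ge\log\big(P(X)/Q(X)\big)=0$, with $D_{\infty}=0$ iff $P(A)\le Q(A)$ for all measurable $A$; taking complements this yields $Q(A)\le P(A)$ as well, hence $P=Q$.

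The main obstacle will be the bookkeeping in the equality analysis: one must verify that the reduction to $W$ on $S$ and the nested inequalities $\int_{S}W^{\alpha}\,\dd Q\le P(S)^{\alpha}\le 1$ interact correctly with the conventions $0\cdot\infty=0$ and $0^{1-\alpha}=\infty$, and that strict convexity genuinely upgrades each Jensen bound to an equivalence rather than a mere implication. The remaining subtlety is the $\alpha=\infty$ case, where the set-wise domination $P\le Q$ must be turned into the identity $P=Q$; everything else is routine. Alternatively, nonnegativity at $\alpha=1$ and $\alpha=\infty$ could be obtained from monotonicity (Theorem~\ref{thm:3-veh}) and continuity (Theorem~\ref{thm:7-veh}) together with the interior bound, but the equality cases are cleanest handled directly as above.
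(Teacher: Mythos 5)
The paper offers no proof of this statement—it is quoted verbatim in the appendix as Theorem 8 of the cited survey \cite{Erven-Harremos-2014}—so there is nothing internal to compare against; your argument is a correct, self-contained proof. The reduction to $\int_S W^\alpha\,\dd Q$ with $W=p/q$ on $S=\{q>0\}$, the two-step bound $\int_S W^\alpha\,\dd Q\le P(S)^\alpha\le 1$, the strict-convexity/concavity analysis of the equality cases, and the treatment of the boundary orders via Theorems \ref{thm:4-veh} and \ref{thm:6-veh} are all sound and agree with the standard Jensen-inequality proof given in that reference.
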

\begin{prop}\cite[Proposition 2]{Erven-Harremos-2014}.\label{prop:8-veh}
For any $0<\alpha<1$, the R\'enyi divergence shows the following skew-symmetry property \[D_\alpha(P||Q) = \frac{\alpha}{1-\alpha}D_{1-\alpha}(Q||P).\]
\end{prop}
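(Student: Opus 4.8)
The plan is to prove the identity by a direct computation from the defining formula \eqref{eq:classical-renyi-alpha}, exploiting the fact that the integral appearing in the R\'enyi divergence is invariant under simultaneously swapping the two measures and replacing $\alpha$ by $1-\alpha$. Throughout, fix a common $\sigma$-finite dominating measure $\mu$ with $P\ll\mu$ and $Q\ll\mu$, and write $p=\dv{P}{\mu}$, $q=\dv{Q}{\mu}$. Since $0<\alpha<1$ we also have $0<1-\alpha<1$, so both divergences are given by the same formula \eqref{eq:classical-renyi-alpha} and no pseudoinverse conventions are needed.

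The central observation I would record first is that the single quantity $I := \int_X p^\alpha q^{1-\alpha}\,\dd\mu$ enters both $D_\alpha(P||Q)$ and $D_{1-\alpha}(Q||P)$. Indeed, the integrand satisfies $p^\alpha q^{1-\alpha} = q^{1-\alpha}\,p^{1-(1-\alpha)}$, so the integral defining $D_{1-\alpha}(Q||P)$ (order $1-\alpha$, pair $(Q,P)$) is literally the same $I$. Setting $L := \log I$, the definition gives
\[
D_\alpha(P||Q) = \frac{1}{\alpha-1}\,L = -\frac{L}{1-\alpha},
\]
while applying the definition with order $1-\alpha$ to the pair $(Q,P)$ yields
\[
D_{1-\alpha}(Q||P) = \frac{1}{(1-\alpha)-1}\log\!\int_X q^{1-\alpha}p^{\alpha}\,\dd\mu = -\frac{L}{\alpha}.
\]
Multiplying the latter by the prefactor then gives $\frac{\alpha}{1-\alpha}D_{1-\alpha}(Q||P) = \frac{\alpha}{1-\alpha}\bigl(-\tfrac{L}{\alpha}\bigr) = -\frac{L}{1-\alpha} = D_\alpha(P||Q)$, which is exactly the claim.

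There is essentially no analytic obstacle here; the only point deserving a word is the degenerate case. If $P$ and $Q$ have disjoint supports then $I=0$ and $L=-\infty$, but the finite negative prefactors $-1/(1-\alpha)$ and $-1/\alpha$ force both $D_\alpha(P||Q)$ and $D_{1-\alpha}(Q||P)$ to equal $+\infty$, so the identity persists in the extended sense $+\infty = \frac{\alpha}{1-\alpha}\cdot(+\infty)$. Finally, independence of $I$ (and hence of both sides) from the choice of $\mu$ is precisely the content of the remark following Definition \ref{defn:renyi-divergence-Kullback}, so no further justification is required.
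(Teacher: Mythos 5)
Your proof is correct. The paper does not actually prove this statement --- it is quoted verbatim from the survey \cite{Erven-Harremos-2014} (Proposition 2 there) and used as a black box --- and your direct computation, observing that the single integral $I=\int_X p^\alpha q^{1-\alpha}\,\dd\mu$ underlies both $D_\alpha(P||Q)$ and $D_{1-\alpha}(Q||P)$ and only the prefactors $\tfrac{1}{\alpha-1}$ versus $\tfrac{1}{-\alpha}$ differ, is exactly the standard argument; your handling of the mutually singular case ($I=0$, both sides $+\infty$) and the remark that $I\le 1$ for $0<\alpha<1$ means no other degeneracy can occur are both sound.
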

Note that in particular, R\'enyi divergence is symmetric for $\alpha=1/2$, and that skew-symmetry does not hold for $\alpha = 0$ and $\alpha = 1$.

\begin{thm}\cite[Theorem 16]{Erven-Harremos-2014}.\label{thm:16-veh}
For any $0<\alpha\leq \beta <1$,\[\frac{\alpha}{\beta}\frac{1-\beta}{1-\alpha}D_{\beta}(P||Q)\leq D_\alpha (P||Q)\leq D_{\beta}(P||Q).\]
\end{thm}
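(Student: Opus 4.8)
The right-hand inequality $D_\alpha(P||Q)\le D_\beta(P||Q)$ is exactly the monotonicity of the R\'enyi divergence in its order, i.e. Theorem \ref{thm:3-veh}, so the whole content lies in the left-hand inequality. The plan is to recast it through the auxiliary function
\[
f(\gamma):=(1-\gamma)D_\gamma(P||Q)=-\log\int_X p^{\gamma}q^{1-\gamma}\,\dd\mu,\qquad p:=\dv{P}{\mu},\ q:=\dv{Q}{\mu}.
\]
Multiplying the target inequality $\tfrac{\alpha}{\beta}\tfrac{1-\beta}{1-\alpha}D_\beta\le D_\alpha$ through by the positive factor $(1-\alpha)$ and recalling $(1-\beta)D_\beta=f(\beta)$, one checks that it is \emph{equivalent} to
\[
\frac{f(\beta)}{\beta}\ \le\ \frac{f(\alpha)}{\alpha},\qquad 0<\alpha\le\beta<1,
\]
that is, to the assertion that $\gamma\mapsto f(\gamma)/\gamma$ is nonincreasing on $(0,1)$.

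I would deduce this from two facts: (i) $f$ is concave on $[0,1)$, and (ii) $f(0)=D_0(P||Q)\ge 0$. Granting them, the argument is the standard secant-from-the-origin estimate. First dispose of the degenerate case: if $D_0(P||Q)=\infty$ then by Theorem \ref{thm:3-veh} every $D_\gamma(P||Q)=\infty$ and both inequalities hold trivially, so assume $D_0<\infty$. Fix $0<\alpha\le\beta<1$ and for $0<\varepsilon<\alpha$ write $\alpha=\lambda\varepsilon+(1-\lambda)\beta$ with $\lambda=\tfrac{\beta-\alpha}{\beta-\varepsilon}\in[0,1]$. Concavity gives $f(\alpha)\ge\lambda f(\varepsilon)+(1-\lambda)f(\beta)$; letting $\varepsilon\downarrow 0$ and using continuity (Theorem \ref{thm:7-veh}, since $[0,1]$ lies in the set $\mathcal{A}$) together with $f(\varepsilon)\to f(0)=D_0\ge 0$ (nonnegativity is Theorem \ref{thm:8-veh}) yields
\[
f(\alpha)\ \ge\ \frac{\beta-\alpha}{\beta}\,D_0+\frac{\alpha}{\beta}\,f(\beta)\ \ge\ \frac{\alpha}{\beta}\,f(\beta),
\]
which is precisely $f(\alpha)/\alpha\ge f(\beta)/\beta$, hence the left inequality.

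The real work, and the step I expect to be the main obstacle, is the concavity in (i). Here the plan is to note that for $0<\gamma<1$ the conventions make the integrand vanish off $\{p>0,\,q>0\}$, so that, writing $X:=\log(p/q)$ on that set,
\[
\int_X p^{\gamma}q^{1-\gamma}\,\dd\mu=\int_{\{p>0,\,q>0\}}e^{\gamma X}\,\dd Q,
\]
which exhibits $e^{-f(\gamma)}$ as a (sub-probability) moment generating function of $X$ in the variable $\gamma$. For $\gamma=\lambda\gamma_1+(1-\lambda)\gamma_2$, H\"older's inequality with conjugate exponents $1/\lambda$ and $1/(1-\lambda)$ applied to $e^{\gamma X}=e^{\lambda\gamma_1 X}e^{(1-\lambda)\gamma_2 X}$ gives
\[
\int e^{\gamma X}\,\dd Q\ \le\ \Big(\int e^{\gamma_1 X}\,\dd Q\Big)^{\lambda}\Big(\int e^{\gamma_2 X}\,\dd Q\Big)^{1-\lambda},
\]
and applying $-\log$ reverses this into the concavity $f(\gamma)\ge\lambda f(\gamma_1)+(1-\lambda)f(\gamma_2)$. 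The delicate points to verify carefully are the measure-theoretic bookkeeping on the regions where $p$ or $q$ vanishes (so that the reduction to $\{p>0,q>0\}$ and the conventions $0\cdot\infty=0$ are respected, and $\int e^{\gamma X}\,\dd Q>0$ in the nondegenerate case), and the passage to the limit $\varepsilon\downarrow 0$ at the left endpoint, for which Theorem \ref{thm:7-veh} is exactly what is needed.
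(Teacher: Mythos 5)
Your proof is correct. Note first that the paper does not actually prove this statement: Theorem \ref{thm:16-veh} is quoted from \cite{Erven-Harremos-2014} as background material, so there is no in-paper argument to compare against. That said, the appendix already contains the two facts from which the left-hand inequality follows in two lines: by skew-symmetry (Proposition \ref{prop:8-veh}), $\frac{1-\alpha}{\alpha}D_\alpha(P||Q)=D_{1-\alpha}(Q||P)$ for $0<\alpha<1$, and since $\alpha\leq\beta$ gives $1-\beta\leq 1-\alpha$, monotonicity in the order (Theorem \ref{thm:3-veh}) yields $D_{1-\alpha}(Q||P)\geq D_{1-\beta}(Q||P)=\frac{1-\beta}{\beta}D_{\beta}(P||Q)$, which rearranges to the claimed lower bound; this is the standard derivation. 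Your route is genuinely different: you reduce the inequality to the monotonicity of $\gamma\mapsto f(\gamma)/\gamma$ with $f(\gamma)=(1-\gamma)D_\gamma(P||Q)$, and obtain that from concavity of $f$ (reproved via H\"older, i.e.\ log-convexity of $\gamma\mapsto\int p^{\gamma}q^{1-\gamma}\,d\mu$; this is exactly Corollary \ref{cor:2-veh}) together with $f(0)=D_0(P||Q)\geq 0$ and the secant-through-the-origin estimate. The steps check out: the algebraic equivalence with $f(\beta)/\beta\leq f(\alpha)/\alpha$ is right, the reduction of the integral to $\{p>0,\,q>0\}$ respects the conventions, the limit $\varepsilon\downarrow 0$ is covered by Theorem \ref{thm:7-veh} (or just by Definition \ref{defn:extended-orders}), and the degenerate case $D_0=\infty$ (equivalently $P\perp Q$, Theorem \ref{thm:24-veh}) is disposed of correctly. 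What your approach buys is that it exhibits the inequality as a consequence of concavity of $(1-\gamma)D_\gamma$ plus nonnegativity alone, with no appeal to skew-symmetry; the cost is that it is substantially longer than the two-line argument above.
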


\begin{thm}\cite[Theorem 23]{Erven-Harremos-2014}.\label{thm:23-veh}The following conditions are equivalent:
\begin{enumerate}
    \item \label{item:vEH-1}$Q\ll P,$
    \item \label{item:vEH-2} $Q(\{p>0\}) = 1$,
    \item\label{item:vEH-3} $D_0(P||Q) = 0$,
    \item\label{item:vEH-4} $\lim_{\alpha \downarrow 0} D_{\alpha}(P||Q) = 0.$
\end{enumerate}
\end{thm}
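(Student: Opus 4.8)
The plan is to prove the chain of equivalences by routing everything through Theorem~\ref{thm:4-veh} together with the definition of $D_0$ as a one-sided limit, so that the only genuinely measure-theoretic content is isolated in the equivalence \ref{item:vEH-1}~$\Leftrightarrow$~\ref{item:vEH-2}. The other two equivalences should then be essentially formal.

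First I would dispose of \ref{item:vEH-3}~$\Leftrightarrow$~\ref{item:vEH-4}, which is immediate: by Definition~\ref{defn:extended-orders} the quantity $D_0(P||Q)$ \emph{is} the limit $\lim_{\alpha\downarrow 0}D_\alpha(P||Q)$, and this limit exists because $\alpha\mapsto D_\alpha(P||Q)$ is nondecreasing (Theorem~\ref{thm:3-veh}). Hence the two statements are literally the same assertion. Next, \ref{item:vEH-2}~$\Leftrightarrow$~\ref{item:vEH-3} follows from Theorem~\ref{thm:4-veh}, which gives $D_0(P||Q)=-\log Q(\{p>0\})$. Since $Q$ is a probability measure we always have $Q(\{p>0\})\in[0,1]$, and because $t\mapsto -\log t$ vanishes only at $t=1$ on this range, $D_0(P||Q)=0$ holds precisely when $Q(\{p>0\})=1$.

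The main work is \ref{item:vEH-1}~$\Leftrightarrow$~\ref{item:vEH-2}, where I must keep track of the dominating $\sigma$-finite measure $\mu$ with $p=\dv{P}{\mu}$ and $q=\dv{Q}{\mu}$. For the forward direction, observe $P(\{p=0\})=\int_{\{p=0\}}p\,\dd\mu=0$, so if $Q\ll P$ then $Q(\{p=0\})=0$, i.e. $Q(\{p>0\})=1$. For the converse, assume $Q(\{p>0\})=1$ and take any $E$ with $P(E)=0$; then $p=0$ holds $\mu$-a.e. on $E$, so $\mu(E\cap\{p>0\})=0$ and therefore $Q(E\cap\{p>0\})=\int_{E\cap\{p>0\}}q\,\dd\mu=0$, while $Q(E\cap\{p=0\})\le Q(\{p=0\})=0$; adding the two pieces gives $Q(E)=0$, so $Q\ll P$.

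I expect the converse half of this last equivalence to be the only step demanding genuine care: the subtlety is that the vanishing of $P$ on $E$ forces $\mu$ itself to vanish on $E\cap\{p>0\}$, which is what lets the $q$-integral collapse. I would also note explicitly that this argument does not depend on the particular choice of $\mu$, consistent with the remark following Definition~\ref{defn:renyi-divergence-Kullback} that the divergence is $\mu$-independent, so no separate verification of well-definedness is needed.
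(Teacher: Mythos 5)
Your proof is correct. Note, however, that the paper itself offers no proof of this statement: Theorem~\ref{thm:23-veh} is quoted verbatim from the survey \cite{Erven-Harremos-2014} as background material in the appendix, so there is no "paper's proof" to compare against. Your argument supplies exactly the missing content, and it does so in the natural way: \ref{item:vEH-3}~$\Leftrightarrow$~\ref{item:vEH-4} is tautological given Definition~\ref{defn:extended-orders} (with Theorem~\ref{thm:3-veh} guaranteeing the limit exists), \ref{item:vEH-2}~$\Leftrightarrow$~\ref{item:vEH-3} is an immediate consequence of the formula $D_0(P||Q)=-\log Q(\{p>0\})$ of Theorem~\ref{thm:4-veh} together with the fact that $-\log t=0$ on $[0,1]$ only at $t=1$, and the measure-theoretic equivalence \ref{item:vEH-1}~$\Leftrightarrow$~\ref{item:vEH-2} is handled correctly in both directions. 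In particular the converse direction is the only place requiring care, and you get it right: $P(E)=0$ forces $p=0$ $\mu$-a.e.\ on $E$ (using $p\ge 0$), hence $\mu(E\cap\{p>0\})=0$, which kills the $q$-integral over that piece, while the piece $E\cap\{p=0\}$ is $Q$-null by hypothesis. The observation that the argument is independent of the choice of dominating measure $\mu$ is consistent with the remark after Definition~\ref{defn:renyi-divergence-Kullback}. The only dependency left unproved is Theorem~\ref{thm:4-veh} itself, but since the paper also quotes that without proof, your argument is as self-contained as the paper's framework permits.
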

\begin{thm}\cite[Theorem 24]{Erven-Harremos-2014}.\label{thm:24-veh}The following conditions are equivalent:
\begin{enumerate}
    \item \label{item:vEH-5}$P\perp Q,$
    \item \label{item:vEH-6} $Q(\{p>0\}) = 0$,
    \item\label{item:vEH-7} $D_{\alpha}(P||Q) = \infty$ for some $\alpha\in [0,1)$,
    \item\label{item:vEH-8} $D_{\alpha}(P||Q) = \infty$ for all $\alpha\in [0,\infty].$
\end{enumerate}
\end{thm}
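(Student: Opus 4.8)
The plan is to prove the chain \ref{item:vEH-5} $\Leftrightarrow$ \ref{item:vEH-6} $\Rightarrow$ \ref{item:vEH-8} $\Rightarrow$ \ref{item:vEH-7} $\Rightarrow$ \ref{item:vEH-6}, which closes the loop and yields all four equivalences at once. Throughout I fix a common dominating $\sigma$-finite measure $\mu$ with $p=\dv{P}{\mu}$ and $q=\dv{Q}{\mu}$, using that the value of $D_\alpha(P||Q)$ does not depend on this choice. The organizing device in every step is the set $\{p>0\}$, on which $P$ is concentrated.

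First I would dispose of \ref{item:vEH-5} $\Leftrightarrow$ \ref{item:vEH-6}. The key observation is that $P$ is always carried by $\{p>0\}$, since $P(\{p=0\})=\int_{\{p=0\}}p\,\dd\mu=0$. Hence $P\perp Q$ holds precisely when $Q$ assigns no mass to a set carrying $P$: taking $A=\{p>0\}$ gives $Q(\{p>0\})=0$ immediately from singularity, and conversely if $Q(\{p>0\})=0$ then the same set $A$ witnesses mutual singularity because $P(A^{c})=0=Q(A)$.

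Next, for \ref{item:vEH-6} $\Rightarrow$ \ref{item:vEH-8}, the central step is to translate $Q(\{p>0\})=\int_{\{p>0\}}q\,\dd\mu=0$ into the statement that $q=0$ holds $\mu$-almost everywhere on $\{p>0\}$. For $\alpha\in(0,1)$ the integrand $p^\alpha q^{1-\alpha}$ then vanishes $\mu$-a.e.\ (it is zero where $p=0$ and zero where $q=0$), so $\int p^\alpha q^{1-\alpha}\,\dd\mu=0$; since $\tfrac{1}{\alpha-1}<0$ this forces $D_\alpha(P||Q)=\infty$. The endpoint $\alpha=0$ follows directly from Theorem \ref{thm:4-veh}, as $-\log Q(\{p>0\})=-\log 0=\infty$, and the remaining orders $\alpha\in\{1\}\cup(1,\infty]$ are handled simultaneously by the monotonicity of $\alpha\mapsto D_\alpha(P||Q)$ from Theorem \ref{thm:3-veh}, which bounds each such $D_\alpha$ below by $D_{1/2}(P||Q)=\infty$.

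The implication \ref{item:vEH-8} $\Rightarrow$ \ref{item:vEH-7} is trivial. Finally I would prove \ref{item:vEH-7} $\Rightarrow$ \ref{item:vEH-6} by reversing the analysis: if $D_0(P||Q)=\infty$ then Theorem \ref{thm:4-veh} gives $Q(\{p>0\})=0$ at once, while if $D_\alpha(P||Q)=\infty$ for some $\alpha\in(0,1)$ then, since $\tfrac{1}{\alpha-1}<0$, we must have $\int p^\alpha q^{1-\alpha}\,\dd\mu=0$; as the integrand is nonnegative it vanishes $\mu$-a.e., forcing $q=0$ $\mu$-a.e.\ on $\{p>0\}$ and hence $Q(\{p>0\})=0$. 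I expect the only delicate points to be the careful use of the conventions $0^{1-\alpha}=\infty$ and $0\cdot\infty=0$ from Definition \ref{defn:renyi-divergence-Kullback} and the insistence that every almost-everywhere argument be phrased with respect to $\mu$ rather than $P$ or $Q$; neither is a genuine obstacle once $\{p>0\}$ is used consistently as the carrier of $P$.
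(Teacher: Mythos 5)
This theorem is one the paper does not prove at all: it is quoted verbatim from the van Erven--Harrem\"{o}s survey (their Theorem 24) and used as a black box, so there is no in-paper argument to compare yours against. Judged on its own, your proof is correct and is essentially the standard one. The cyclic scheme $(1)\Leftrightarrow(2)$, $(2)\Rightarrow(4)\Rightarrow(3)\Rightarrow(2)$ is logically sound and closes all four equivalences; the treatment of the three regimes $\alpha=0$ (via Theorem 4 of the survey), $\alpha\in(0,1)$ (vanishing of the nonnegative integrand $p^{\alpha}q^{1-\alpha}$ combined with the sign of $\tfrac{1}{\alpha-1}$), and $\alpha\geq 1$ (monotonicity from Theorem 3) is exactly right, as is the reverse direction extracting $\mu(\{p>0\}\cap\{q>0\})=0$ from $\int p^{\alpha}q^{1-\alpha}\,\dd\mu=0$.

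The one place where you are a touch too quick is the implication from $P\perp Q$ to $Q(\{p>0\})=0$. Singularity only hands you \emph{some} separating set $S$ with $P(S^{c})=0$ and $Q(S)=0$; it does not directly name $\{p>0\}$. You need the intermediate observation that $P(S^{c})=\int_{S^{c}}p\,\dd\mu=0$ forces $p=0$ $\mu$-a.e.\ on $S^{c}$, so $\{p>0\}\subseteq S$ up to a $\mu$-null set, and then $Q(\{p>0\})\leq Q(S)=0$ because $Q\ll\mu$ kills the null discrepancy. This is a one-line repair, not a genuine gap, but as written the sentence ``taking $A=\{p>0\}$ gives $Q(\{p>0\})=0$ immediately from singularity'' conflates the existential separating set with the specific one you want. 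Everything else, including the careful handling of the conventions $0^{1-\alpha}=\infty$ and $0\cdot\infty=0$ (which you correctly note are only needed for $\alpha>1$ and hence never enter your $(0,1)$ computations), is fine.
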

\begin{cor}\cite[Corollary 2]{Erven-Harremos-2014}.\label{cor:2-veh}
The function $[0, \infty] \ni \alpha \mapsto (1-\alpha)D_{\alpha}(P||Q)$ is concave, with the conventions that it is $0$ at $\alpha = 1$ even if $D(P||Q) = \infty$ and that it is $0$ at $\alpha = \infty$ if $P=Q$.
\end{cor}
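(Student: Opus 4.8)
The plan is to reduce the assertion to the convexity of a single scalar function and to extract that convexity from Hölder's inequality, handling the boundary orders $0,1,\infty$ separately. Fix a $\sigma$-finite measure $\mu$ dominating both $P$ and $Q$, with densities $p=\dv{P}{\mu}$ and $q=\dv{Q}{\mu}$; by the $\mu$-independence noted in the remark after Definition~\ref{defn:renyi-divergence-Kullback} I may take $\mu=\tfrac{P+Q}{2}$, so that $p+q=2$ $\mu$-a.e.\ and in particular $0\le p,q\le 2$. For $\alpha\in(0,1)\cup(1,\infty)$ set
\[
\psi(\alpha):=\int_X p^{\alpha}q^{1-\alpha}\,\dd\mu\in(0,\infty],
\]
so that directly from \eqref{eq:classical-renyi-alpha} one has $(1-\alpha)D_{\alpha}(P\|Q)=-\log\psi(\alpha)$. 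Hence it suffices to prove that $\log\psi$ is convex on $(0,\infty)$ as a function valued in $(-\infty,\infty]$, since then $-\log\psi$, and therefore $(1-\alpha)D_\alpha$, is concave there (the concavity persisting in the extended-real sense, i.e.\ as convexity of the hypograph).

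For the convexity of $\log\psi$, I would take $\alpha_0,\alpha_1\in(0,\infty)$ and $\lambda\in[0,1]$ and set $\alpha=\lambda\alpha_0+(1-\lambda)\alpha_1$. The algebraic identity $1-\alpha=\lambda(1-\alpha_0)+(1-\lambda)(1-\alpha_1)$ factors the integrand as
\[
p^{\alpha}q^{1-\alpha}=\bigl(p^{\alpha_0}q^{1-\alpha_0}\bigr)^{\lambda}\bigl(p^{\alpha_1}q^{1-\alpha_1}\bigr)^{1-\lambda},
\]
and Hölder's inequality with conjugate exponents $1/\lambda$ and $1/(1-\lambda)$ yields
\[
\psi(\alpha)\le\psi(\alpha_0)^{\lambda}\,\psi(\alpha_1)^{1-\lambda},
\]
that is, $\log\psi(\alpha)\le\lambda\log\psi(\alpha_0)+(1-\lambda)\log\psi(\alpha_1)$, which is exactly convexity. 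Observe that $\alpha=1$ is an interior point of $(0,\infty)$ and is covered by this argument, with $\psi(1)=\int p\,\dd\mu=1$; thus $(1-\alpha)D_\alpha$ takes the value $-\log 1=0$ at $\alpha=1$, matching the stated convention even when $D(P\|Q)=\infty$, since the factor $(1-\alpha)$ annihilates the singularity at the level of the formula $-\log\psi$.

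It remains to incorporate the endpoints $\alpha=0$ and $\alpha=\infty$, which I expect to be the genuine difficulty, as there the formula degenerates and the conventions $0\cdot\infty$ intervene. For $\alpha=0$ I would invoke continuity from the right: $(1-\alpha)\to1$ while $D_\alpha(P\|Q)\to D_0(P\|Q)$ by Definition~\ref{defn:extended-orders}, so the extended value at $0$ equals the interior limit, and the one-sided inequality $f(0)\le\lim_{\alpha\downarrow0}f(\alpha)$ that characterizes a concave extension holds (with equality). For $\alpha=\infty$ I would use Theorem~\ref{thm:6-veh}: if $P\neq Q$ then $D_\infty(P\|Q)>0$ and $(1-\alpha)D_\alpha\to-\infty$, so the point at infinity carries the value $-\infty$ and concavity is preserved; if $P=Q$ then $D_\alpha\equiv0$, the function is identically $0$, and the convention assigning it $0$ at $\alpha=\infty$ is exactly what keeps the hypograph convex.

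The main obstacle is therefore not the interior concavity, which is nothing more than the Hölder step above, but the careful bookkeeping of the degenerate cases where $\psi$ takes the value $\infty$ or where $D_\alpha\equiv\infty$ (for instance $P\perp Q$, by Theorem~\ref{thm:24-veh}, giving the value $+\infty$ on $[0,1)$ and $-\infty$ on $(1,\infty]$). In each such case I would verify concavity through the convex-hypograph formulation rather than the pointwise midpoint inequality, checking that the conventions of the statement are arranged precisely so that the hypograph of $\alpha\mapsto(1-\alpha)D_\alpha(P\|Q)$ remains convex at $\alpha=0,1,\infty$; the limiting computations above supply exactly the endpoint inequalities required.
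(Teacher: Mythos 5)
The paper itself contains no proof of this statement: it is restated in the appendix verbatim from \cite{Erven-Harremos-2014} and used as a black box, so the only thing to assess is your self-contained argument, which follows the standard route of writing $(1-\alpha)D_{\alpha}(P||Q)=-\log\psi(\alpha)$ with $\psi(\alpha)=\int_X p^{\alpha}q^{1-\alpha}\,\dd\mu$ and deducing log-convexity of $\psi$ from H\"older. That strategy is sound, and your endpoint treatments at $\alpha=0$ and $\alpha=\infty$ are fine as written.

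Two points must be made explicit to close the argument. First, the pointwise factorization $p^{\alpha}q^{1-\alpha}=\bigl(p^{\alpha_0}q^{1-\alpha_0}\bigr)^{\lambda}\bigl(p^{\alpha_1}q^{1-\alpha_1}\bigr)^{1-\lambda}$ \emph{fails} on the set $\{p>0,\,q=0\}$ precisely when $\alpha_0<1<\alpha_1$: there the right-hand side is $0^{\lambda}\cdot\infty^{1-\lambda}=0$ under the stated convention $0\cdot\infty=0$, while the left-hand side is $\infty$ (or $p$, at $\alpha=1$). So H\"older cannot be invoked verbatim across $\alpha=1$ when $P\nll Q$. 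The repair is short and should be stated: split the integral over $\{q>0\}$, where the factorization is exact and H\"older applies, and over $\{p>0,\,q=0\}$; if the latter set has positive measure then $\psi(\alpha')=\infty$ for \emph{every} $\alpha'>1$, so the target inequality $\psi(\alpha)\le\psi(\alpha_0)^{\lambda}\psi(\alpha_1)^{1-\lambda}$ holds trivially unless in addition $\psi(\alpha_0)=0$, i.e.\ unless $P\perp Q$ --- which is exactly the degenerate case you already isolate and settle via the hypograph (values $+\infty$ on $[0,1)$, $0$ at $1$, $-\infty$ on $(1,\infty]$). Second, your remark that the factor $(1-\alpha)$ ``annihilates the singularity'' at $\alpha=1$ is not correct when $P\nll Q$: dominated convergence (using $p^{\alpha}q^{1-\alpha}\le\alpha p+(1-\alpha)q$) gives $\lim_{\alpha\uparrow1}\psi(\alpha)=P(\{q>0\})<1$, so $-\log\psi(\alpha)\to-\log P(\{q>0\})>0$ while the convention assigns the value $0$; the function genuinely jumps down at $\alpha=1$. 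Concavity nevertheless survives, precisely because the assigned value lies \emph{below} the left limit --- the mirror image of the one-sided inequality you invoke at $\alpha=0$ --- and because $f=-\infty$ on $(1,\infty)$ in that regime makes all straddling chord conditions vacuous. With these two clarifications your proof is complete and correct.
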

\section*{Acknowledgements}
We thank Mark Wilde and Mil{\'a}n Mosonyi for their comments on an earlier version \cite{androulakis2022-arxiv} of this article, and for the references that they brought to our attention. We also thank the anonymous referees for the constructive comments which improved the article. 

The second author thanks the Fulbright Scholar Program and United States-India Educational Foundation for providing funding and other support to conduct this research through a Fulbright-Nehru Postdoctoral Fellowship (Award No. 2594/FNPDR/2020), he also acknowledges the United States Army Research Office MURI award on
Quantum Network Science, awarded under grant number W911NF2110325 for partially funding this research.

\bibliographystyle{IEEEtran}
\bibliography{bibliography}
\end{document}